\numberwithin{equation}{section}
\theoremstyle{plain}
\newtheorem{theorem}{Theorem}[section]
\newtheorem{proposition}[theorem]{Proposition}
\newtheorem{corollary}[theorem]{Corollary}
\theoremstyle{definition}
\newtheorem{definition}[theorem]{Definition}
\newtheorem{remark}[theorem]{Remark}
\newtheorem{notation}[theorem]{Notation}
\newcommand{\CC}{\mathbb{C}}
\newcommand{\RR}{\mathbb{R}}
\newcommand{\ZZ}{\mathbb{Z}}
\newcommand{\NN}{\mathbb{N}}
\newcommand{\PP}{\mathbb{P}}
\newcommand{\del}{\partial}
\newcommand{\dd}{\mathsf{d}}
\def\Set#1{\Setdef#1\Setdef}
\def\Setdef#1|#2\Setdef{\left\{#1\,\;\mathstrut\vrule\,\;#2\right\}}%
\begin{document}

\title{An elliptic fibration arising from the Lagrange top and its monodromy}
\author{Genki Ishikawa}
\affil{Department of Mathematical Sciences, Ritsumeikan University}
\date{}
\maketitle
\begin{abstract}
  This paper is to investigate an elliptic fibration over $\CC\PP^2$ arising from the Lagrange top from the viewpoint of complex algebraic geometry. The description of the discriminant locus of this elliptic fibration is given in detail. Moreover, the concrete description of the discriminant locus and the complete classification of singular fibres of the elliptic fibration are obtained according to Miranda's theory of elliptic threefolds after suitable modifications of the base and total spaces. Furthermore, the monodromy of the elliptic fibration is described.
\end{abstract}

\section{Introduction}

The heavy rigid body, i.e. a rigid body around a fixed point under gravity, is one of the typical solvable problems in analytical mechanics.
The rotational motion of a heavy rigid body can be formulated as a Hamiltonian system on the cotangent bundle to the three dimensional rotation group $SO(3)$ and because of the symmetry it can be reduced to a Hamiltonian system on the dual to the Lie algebra $\mathfrak{so}(3)\ltimes \mathbb{R}^3$.
Among such rigid bodies, it is known as a result of Ziglin \cite{ziglin_I-II} that the Hamiltonian systems are completely integrable exactly in the four particular cases: the Euler top, the Lagrange top, the Kowalevski top, and the Goryachev-Chaplygin top.

One of the basic results in the theory of completely integrable systems is Liouville-Arnol'd Theorem \cite{arnold_1989}. According to this theorem, if a generic common level manifold of first integrals is compact and connected, it is diffeomorphic to a torus and thus a completely integrable system gives rise to a torus fibration. Moreover, there are local charts called the action-angle coordinates on the torus fibration such that Hamilton equation is linearized on each torus. 
In general, global action-angle coordinates do not exist. Duistermaat \cite{duistermaat_1980} studied the obstructions to the existence of global action-angle coordinates including the monodromy.

Completely integrable systems have also been discussed from the viewpoint of complex algebraic geometry. One of the important concepts in the complex algebro-geometric theory of integrable systems is the algebraic complete integrability, which was introduced as the natural extension of Liouville integrability to complex algebraic category by Adler, van Moerbeke, and Mumford. See e.g. \cite{Adler-vanMoerbele-Vanhaecke,mumford_2007_II,vanhaecke_2001}.
An algebraic completely integrable system gives rise to a fibration whose generic fibres are abelian varieties and its flow can be linearized on each abelian variety.
For the systems described by Lax equations, the spectral curves and their Jacobian varieties have been studied to obtain the linearization of the flows for example in Adler and van Moerbeke \cite{Adler-vanMoerbeke2}, Gavrilov \cite{Gavrilov}, Beauville \cite{Beauville1990}, and Griffiths \cite{griffiths_1985}.
In particular, the Lagrange top has been studied by Gavrilov and Zhivkov \cite{Gavrilov-Zhivkov} and the spherical pendulum, which can be regarded as a special case of the Lagrange top, has been investigated by Beukers and Cushman \cite{Beukers-Cushman}, where the associated monodromy is also studied.

For algebraic completely integrable systems, the geometrical and dynamical properties of the original dynamical systems have mainly been investigated by focusing on each generic fibre of the associated fibrations. On the other hand,  few complex algebro-geometric studies are carried out in view of the singularities of the fibrations.

As for the Euler top, describing the integrable heavy rigid bodies without external force, complex algebro-geometric researches are made on the complex algebraic geometry of the fibrations by integral curves and by spectral curves by \cite{Naruki-Tarama,Tarama_Francoise_2014,Francoise-Tarama_2015}. 
They have connected the complex algebraic geometry of the associated elliptic fibrations, including the monodromy, to the Birkhoff normal forms or the action coordinates.

\medskip

The present paper deals with the complex algebraic geometry of an elliptic fibration induced by the complexified energy-momentum map for the Lagrange top. This elliptic fibration is given as an elliptic threefold over the complex projective plane $\CC\PP^2$ in Weierstra{\ss} normal form. 
The singular locus and complete classification of the singular fibres of the elliptic fibration are concretely described on the basis of Miranda's method discussed in \cite{Miranda1983}. 
Moreover, the monodromy of the elliptic fibration is studied.

The structure of the present paper is as follows: \\
In Section $2$, a review is given about a formulation of the completely integrable system for the Lagrange top with respect to the Lie-Poisson structure on the dual space $\left( \mathfrak{so}(3)\ltimes\RR^3 \right)^*$ of the semi-direct product Lie algebra. Each generic fibre of the energy-momentum map for the Lagrange top admits a free $S^1$-action and the quotient manifold of a generic fibre by the $S^1$-action is given as an affine part of the elliptic curve in Weierstra{\ss} normal form. 

Section $3$ is a brief overview about the classification of singular fibres of the elliptic fibration in Weierstra{\ss} normal form. Kodaira classified the singular fibres of minimal elliptic surfaces in \cite{KodairaI-III}. In the case of elliptic threefolds, Miranda gave the complete classification of singular fibres in \cite{Miranda1983} on the assumption that the reduced discriminant locus of elliptic threefolds permits only nodes as its singularities. In Miranda's classification, singular fibres sitting over generic points of each irreducible component of the discriminant locus belong to the list of Kodaira's classification, whereas singular fibres sitting over the nodes are not always in Kodaira's list. The list of these singular fibres is described at the end of this section.

In Section $4$, the elliptic threefold induced by the complexification of the energy-momentum map is investigated. 
The family of elliptic curves arising from the Lagrange top as in Section 2 can naturally be complexified and then compactified by a complex analytic elliptic fibration in Weierstra{\ss} normal form over $\mathbb{CP}^2$. 
The detailed description of the discriminant locus of this fibration, as well as the singularities of the total space, is given. 
After suitable modifications of base and total spaces, the singular fibres of the elliptic fibration are classified on the basis of Miranda's method \cite{Miranda1983} described in Section 3. 
Finally, the monodromy of the original elliptic threefold is described by Zariski-van Kampen Theorem \cite{Zariski_1929,vanKampen_1933}.

\section{The Lagrange top}

In this section, we review the Hamiltonian formalism of the Lagrange top and discuss its. 
The Lagrange top is a particular case of the heavy rigid bodies, i.e. the rigid bodies with a fixed point influenced by the gravity. 
The rotational motion of the heavy rigid body can be formulated as a Hamiltonian system on the dual $\left(\mathfrak{so}(3)\ltimes \mathbb{R}^3\right)^{\ast}$ of the semi-direct product Lie algebra with respect to the Lie-Poisson bracket. 
Here, we briefly describe this formalism and then characterize the Lagrange top as a special case. 
We then discuss its complete integrability when the system is restricted to generic coadjoint orbits.
See e.g. \cite{Ratiu-vanMoerbeke,audin_1999} for the details. 

Through the Lie algebra isomorphism $\left(\mathbb{R}^3, \times\right)\ni \begin{pmatrix}x_1 \\ x_2 \\ x_3\end{pmatrix}\mapsto\begin{pmatrix}0 & -x_3 & x_2 \\ x_3 & 0 & -x_1\\ -x_2 & x_1 & 0\end{pmatrix}\in\mathfrak{so}(3)$, we identify $\left(\mathfrak{so}(3)\ltimes \mathbb{R}^3\right)^{\ast}$ with $\mathbb{R}^3\times \mathbb{R}^3$, by using the standard inner product on $\mathbb{R}^3\times\mathbb{R}^3\cong\mathbb{R}^6$.

The Lie-Poisson structure on $\left(\mathfrak{so}(3)\ltimes \mathbb{R}^3\right)^{\ast}\cong \RR^3\times\RR^3$ is defined through
\begin{align*}
  \left\{F,G\right\}\left(\Gamma,M\right) 
    = -\langle\Gamma, \left( \nabla_MF\right)\times\left(\nabla_\Gamma G\right) \rangle -\langle\Gamma, \left(\nabla_\Gamma F\right)\times\left(\nabla_MG\right) \rangle-\langle M, \left( \nabla_MF\right)\times\left(\nabla_MG\right) \rangle,
\end{align*}
where $F,G\in\mathcal{C}^\infty\left( \RR^3\times\RR^3 \right)$. 
Here, $\left( \Gamma,M \right)\in\RR^3\times\RR^3$ and $\langle\cdot,\cdot\rangle$ stands for the standard inner product on $\RR^3$ and $\left( \nabla_\Gamma F,\nabla_M F \right)$ denotes the gradient of the function $F$ defined through
\begin{align*}
  \dd F_{\left( \Gamma,M \right)}\left( \xi,\eta \right)=\langle\nabla_\Gamma F\left( \Gamma,M \right),\xi\rangle +\langle\nabla_M F\left( \Gamma,M \right),\eta\rangle,
\end{align*}
where $\left( \xi,\eta \right)\in \RR^3\times\RR^3 \cong T_{\left( \Gamma,M \right)}\left( \RR^3\times\RR^3 \right)$. 

With respect to the Lie-Poisson bracket $\{\cdot,\cdot\}$, the heavy rigid body dynamics is formulated as a Hamiltonian system for the Hamiltonian given as 
\begin{align*}
H\left( \Gamma,M  \right)=\frac{1}{2}\langle M,\Omega\rangle + \langle\Gamma,\chi\rangle,\quad \Omega=\mathcal{J}^{-1}M, 
\end{align*}
where $\left( \Gamma,M \right)\in\RR^3\times\RR^3\cong\left( \mathfrak{so}(3)\ltimes\RR^3 \right)^*$. 
Physically, the constant positive-definite operator $\mathcal{J}: \mathbb{R}^3\rightarrow \mathbb{R}^3$ symmetric with respect to the standard inner metric is called the inertia matrix and the constant vector $\chi\in\mathbb{R}^3$ describes the position of the centre of mass relative to the fixed point. 
The corresponding Hamilton equation on the Poisson space $\left( \RR^3\times\RR^3,\{\cdot,\cdot\} \right)$ is described as
\begin{align}
  \begin{cases}
    \dot{\Gamma}=\Gamma\times \Omega,
    \\
    \dot{M}=M\times\Omega+\Gamma\times\chi,
  \end{cases}
  \label{Euler-Poisson eq}
\end{align}
which is often called as the Euler-Poisson equation.

As is well known, every Poisson manifold can be stratified into the disjoint union of symplectic leaves. 
See e.g. \cite{Marsden-Ratiu1999} for the details. 
For the Poisson space $\left( \RR^3\times\RR^3,\{\cdot,\cdot\} \right)$, the symplectic leaves coincide with the coadjoint orbits in $\left( \mathfrak{so}\left( 3 \right)\ltimes\RR^3 \right)^*$ equipped with the orbit symplectic forms. 
Those of the maximal dimension are described as 
\begin{align*}
  \mathcal{O}_a\coloneq\Set{\left( \Gamma,M \right)\in\RR^3\times\RR^3 | C_1\left( \Gamma,M \right)=1,\,C_2\left( \Gamma,M \right)=a},\,a\in\RR.
\end{align*}
Note that $\dim \mathcal{O}_a=4$. 
Here, $C_1$ and $C_2$ are defined through
\begin{align*}
  C_1\left( \Gamma,M \right)= \langle\Gamma,\Gamma\rangle,\,
  C_2\left( \Gamma,M \right)= \langle\Gamma,M\rangle.
\end{align*}
These are Casimir functions with respect to the Lie-Poisson bracket $\{\cdot,\cdot\}$, which by definition Poisson commute with every smooth function on $\RR^3\times\RR^3$. 

Recall that a Hamiltonian vector field on a Poisson manifold can be restricted to any symplectic leaf and the restricted system is again Hamiltonian with respect to the symplectic structure on the leaf. 
The restriction of the Euler-Poisson equations \eqref{Euler-Poisson eq} to $\mathcal{O}_a$ is therefore a Hamiltonian system on a four-dimensional symplectic manifold.

The heavy rigid body dynamics can also be formulated as a Hamiltonian system on $T^{\ast}SO(3)$ and its relation to the Lie-Poison formalism is explained in terms of the semi-direct product reduction theorem. See e.g. \cite{Ratiu_Tudoran_etc_2005} and \cite{Marsden-Ratiu-Weinstein} for the details.

The {\it{Lagrange top}} is the heavy symmetric top about the axis through the fixed point and the centre of mass can be characterized by $\mathcal{J}=\mathrm{diag}\left( J_1,J_2,J_3 \right)$ and $\chi=\left( \chi_1,\chi_2,\chi_3 \right)\in\RR^3$ satisfying
\begin{align*}
  J_1=J_2,\,\chi_1=\chi_2=0.
\end{align*}
On this assumption, this Hamiltonian system has the fourth constant of motion
\begin{align*}
  K\left( \Gamma,M \right)=-M_3, 
\end{align*}
in addition to the Hamiltonian $H$ and the two Casimir functions $C_1$, $C_2$. 
Here, $M=\left(M_1,M_2,M_3\right)^T$. 
Note that the four constants of motion $H$, $K$, $C_1$, $C_2$ are in involution and functionally independent. On the four-dimensional symplectic leaf $\mathcal{O}_a$, the restricted Hamiltonian system for the Hamiltonian $H$ admits another constant of motion $K$ and hence the system is completely integrable in the sense of Liouville. 

For the convenience of the later computations, we assume $\displaystyle\chi_3/J_1=-1$ by suitable reparametrizations and use the four constants of motion
\begin{align*}
  &H_1\coloneq C_1 = \Gamma_1^2+\Gamma_2^2+\Gamma_3^2,
  \\[5pt]
  &H_2\coloneq C_2/J_1 =\Gamma_1\Omega_1+\Gamma_2\Omega_2+\left( 1+m \right)\Gamma_3\Omega_3,
  \\[2pt]
  &H_3\coloneq H/J_1 = \frac{1}{2}\left( \Omega_1^2+\Omega_2^2+\left( 1+m \right)\Omega_3^2 \right)-\Gamma_3,
  \\[2pt]
  &H_4\coloneq -K/J_3 = \Omega_3,
\end{align*}
where $m=(J_3-J_2)/J_1$ and $\Gamma=\left( \Gamma_1,\Gamma_2,\Gamma_3 \right)^T,
  ~\Omega=\left( \Omega_1,\Omega_2,\Omega_3 \right)^T$, instead of $C_1,\,C_2,\,H,$ and $K$, following \cite{Gavrilov-Zhivkov}. 

Consider the energy-momentum mapping $\displaystyle \mathcal{E}\!\mathcal{M}=(H_3, H_4)\colon\mathcal{O}_a\to\RR^2$. 
A fibre of $\mathcal{E}\!\mathcal{M}$ can be written as
\begin{align*}
  \mathcal{E}\!\mathcal{M}^{-1}(h_3,h_4)=\Set{\left( \Gamma,\Omega \right)\in\RR^3\times\RR^3 | H_1\left( \Gamma,\Omega \right)=1,\,H_2\left( \Gamma,\Omega \right)=a,\,H_3\left( \Gamma,\Omega \right)=h_3,\,H_4\left( \Gamma,\Omega \right)=h_4},
\end{align*}
where $h_3$ and $h_4$ are suitable constants.
Since the constants of motion are polynomials in $\left(\Gamma, \Omega\right)$, it is an algebraic variety. 
This variety has the structure of an $S^1$-bundle over (the real part of) an elliptic curve as follows: \\
Consider the flow of the Hamiltonian vector field generated by $H_4$ whose Hamilton equation is written as
\begin{align*}
  \begin{cases}
    \dot\Omega_1=\Omega_2,
    \\
    \dot\Omega_2=-\Omega_1,
    \\
    \dot\Omega_3=0,
  \end{cases}
  \quad
  \begin{cases}
    \dot\Gamma_1=\Gamma_2,
    \\
    \dot\Gamma_2=-\Gamma_1,
    \\
    \dot\Gamma_3=0.
  \end{cases}
\end{align*}
The flow clearly induces an $S^1$-action on the generic fibres of the energy-momentum mapping $\mathcal{E}\!\mathcal{M}$ as
\begin{align*}
  S^1\times \mathcal{E}\!\mathcal{M}^{-1}(h_3,h_4)\ni\left( \theta,\left( \Gamma,\Omega \right) \right)\mapsto \left( A_{\theta}\Gamma,A_{\theta}\Omega  \right)\in \mathcal{E}\!\mathcal{M}^{-1}(h_3,h_4),
\quad   ~A_{\theta}=
  \begin{pmatrix}
    \cos\theta & -\sin\theta & 0 \\
    \sin\theta & \cos\theta & 0 \\
    0 & 0 & 1 \\
  \end{pmatrix}.
\end{align*}
Clearly, the polynomials $\Gamma_3$ and $\Gamma_1\Omega_2-\Gamma_2\Omega_1$ are invariant with respect to the $S^1$-action. 
Thus, the mapping 
\begin{align*}
  \mathcal{E}\!\mathcal{M}^{-1}(h_3,h_4)\ni\left( \Gamma,\Omega \right) \mapsto (x,y)=\left( -\frac{\Gamma_3}{2},-\frac{\Gamma_1\Omega_2-\Gamma_2\Omega_1}{2} \right)\in\RR^2,
\end{align*}
corresponds to the quotient mapping by the $S^1$-action. 
The image is the quotient space $$\mathcal{E}\!\mathcal{M}^{-1}(h_3,h_4)/S^1$$ and described as the real part of the complex cubic curve defined through 
\begin{align}
  y^2=4x^3-\left( 2h_3+\left( 1+m \right)mh_4^2 \right)x^2-\left( 1+\left( 1+m \right)ah_4 \right)x+\frac{1}{4}(2h_3-\left( 1+m \right)h_4^2-a^2), \label{cubic curve}
\end{align}
where $\left( x,y \right)\in\CC^2$.

\section{Elliptic fibrations}\label{Elliptic fibrations}

This section provides a brief overview of Miranda's theory of elliptic fibrations on complex analytic surfaces, based on \cite{Miranda1983}. 
In Section 4, we study a natural complex elliptic fibration over the complex projective plane in association to the Lagrange top, for which the basic notions are resumed in the present section. 

An elliptic fibration is a holomorphic 
surjection between two complex manifolds where the generic fibres are elliptic curves. 
A thorough fundamental study was carried out by Kodaira in \cite{KodairaI-III} in the case of elliptic surfaces, for which the base space is a compact complex curve and the total space is a compact complex surface. 
The singular fibres as well as the conjugacy classes of the monodromy matrices around them are classified. 
Further, the elliptic surfaces with a global holomorphic section is known to be described as in the Weierstra{\ss} normal form. 
See \cite{Kas1977} for the details. 

As a generalization of Kodaira's theory around elliptic surfaces, we here review Miranda's theory of elliptic threefolds in Weierstra{\ss} normal form, following \cite{Miranda1983}. 
In \cite{Miranda1983}, Miranda studied elliptic fibrations of complex three-dimensional manifolds to complex surfaces on which the singular loci are assumed to have only nodes as the singularities. 
As a result, the singular fibres for the elliptic threefolds are classified and it is shown that there appear singular fibres which cannot be observed in the case of elliptic surfaces. 
Some more recent developments can be found for example in \cite{Nakayama1988,Nakayama2002}.

\subsection{Weierstra{\ss} normal form}\label{Weierstrass normal form}

Let $S$ be a complex manifold and $\mathcal{L}$ a holomorphic line bundle over it. 
Take holomorphic sections $a\in H^0( S,\mathcal{L}^{\otimes4} )$ and $b\in H^0( S,\mathcal{L}^{\otimes6} )$ such that $\Delta\coloneq a^3-27b^2\in H^0\left(S, \mathcal{L}^{\otimes 12}\right)$ is not identically zero on $S$. 
We consider the vector bundle $\mathcal{L}^{\otimes2}\oplus\mathcal{L}^{\otimes3}\oplus\mathcal{O}_{S}$ of rank three over $S$ and denote its projectification by $P\left( \mathcal{L}^{\otimes2}\oplus\mathcal{L}^{\otimes3}\oplus\mathcal{O}_{S} \right):=(\mathcal{L}^{\otimes2}\oplus\mathcal{L}^{\otimes3}\oplus\mathcal{O}_{S})_0/\CC^*$, where the structure sheaf $\mathcal{O}_S$ of $S$ is identified with the trivial line bundle over $S$ and $(\mathcal{L}^{\otimes2}\oplus\mathcal{L}^{\otimes3}\oplus\mathcal{O}_{S})_0$ stands for the complement to the image of the zero section of the vector bundle $\mathcal{L}^{\otimes2}\oplus\mathcal{L}^{\otimes3}\oplus\mathcal{O}_{S}$ in the total space. 
We denote the natural projection of the projective bundle by $\pi\colon P\left( \mathcal{L}^{\otimes2}\oplus\mathcal{L}^{\otimes3}\oplus\mathcal{O}_{S} \right)\to S$. 

Let $\mathcal{W}$ be the divisor on the $\CC\PP^{2}$-bundle $P\left( \mathcal{L}^{\otimes2}\oplus\mathcal{L}^{\otimes3}\oplus\mathcal{O}_{S} \right)$ over $S$ defined through 
  \begin{align}
    Y^{2}Z -4X^{3}+aXZ^{2}+bZ^{3} =0, \label{WNF}
  \end{align}
where $\left(X:Y:Z\right)$ are the homogeneous fibre coordinates of $P\left( \mathcal{L}^{\otimes2}\oplus\mathcal{L}^{\otimes3}\oplus\mathcal{O}_{S} \right)$. 
Restricting the natural projection $\pi$ to $\mathcal{W}$, we obtain an elliptic fibration $\pi_{\mathcal{W}}\colon \mathcal{W}\to S$ over $S$.
  
\begin{definition}
The elliptic fibration $\pi_{\mathcal{W}}\colon \mathcal{W}\to S$ is said to be {\it in Weierstra{\ss} normal form} and the total space $\mathcal{W}$ is called the {\it{Weierstra{\ss} normal form}}. 
The section $\displaystyle \Delta= a^{3}-27b^{2}\in H^0\left(S, \mathcal{L}^{\otimes 12}\right)$ is called the {\it{discriminant}}, whereas the meromorphic function
  \begin{align*}
    J=\frac{a^3}{\Delta}=\frac{a^3}{a^3-27b^2},
  \end{align*}
on $S$ is called the {\it{functional invariant}} of the Weierstra{\ss} normal form.
\end{definition}

It is easily shown that the elliptic fibration $\pi_{\mathcal{W}}$ is flat, but its total space $\mathcal{W}$ may have singular points as described in the following proposition. 

\begin{proposition}{\rm{(\cite[Proposition 2.1]{Miranda1983})}}\label{prop:Sing}    
  Let $A$, $B$, and $D$ be the divisors on $S$ defined by $a=0$, $b=0$, and $\Delta=0$, respectively. 
  Moreover, let $\left(\left( X:Y:Z \right);p\right)$ denote a point of the total space $\mathcal{W}$ where $p\in S$. 
  Then, the following statements hold.
  \begin{enumerate}[$1.$]
  \item The total space $\mathcal{W}$ is smooth along $Z = 0$ and the set given by $\displaystyle\left(X:Y:Z\right)=\left(0:1:0\right)$ defines a holomorphic section of the elliptic fibration $\pi_\mathcal{W}$.
  \item If the total space $\mathcal{W}$ is singular at $\left(\left( X:Y:Z \right);p\right)$, then we necessarily have $Y=0$ and $Z\neq0$.
  \item The total space $\mathcal{W}$ is singular at $\left(\left( 0:0:Z \right);p\right)$ if and only if both $A$ and $B$ contain $p$ and $B$ is singular at $p$.	
  \item The total space $\mathcal{W}$ is singular at $\left(\left( X:0:Z \right);p\right)$ with $X\neq0$ if and only if neither $A$ nor $B$ contains $p$ but $D$ contains $p$ as a singular point of it. In this case, we have $\left(X:0:Z\right)=\left(-3b:0:2a\right)$.
  \end{enumerate}
\end{proposition}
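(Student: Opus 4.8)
The plan is to reduce the entire statement to a local computation and apply the Jacobian criterion for hypersurface singularities. Since being a singular point is a local and intrinsic property, I would fix $p\in S$, choose local holomorphic coordinates $t=(t_1,\dots,t_n)$ centred at $p$, and trivialise $\mathcal{L}$ over a neighbourhood $U$ so that $a$ and $b$ become honest holomorphic functions $a(t),b(t)$. In such a trivialisation $\mathcal{W}$ is cut out inside $U\times\CC\PP^2$ by the single homogeneous equation \eqref{WNF}, so on each standard affine chart of the $\CC\PP^2$-factor it is a hypersurface $\{F=0\}$, and a point is singular for $\mathcal{W}$ exactly when $F$ together with all of its partial derivatives, in both the fibre and the base directions, vanish there. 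I would record at the outset that under a change of trivialisation $\lambda\in\mathcal{O}^*$ one has $a\mapsto\lambda^4 a$ and $b\mapsto\lambda^6 b$ with the fibre coordinates correspondingly reweighted; the loci $A,B,D$, the property of a divisor being singular at $p$, and the projective point $(-3b:0:2a)$ are all invariant under this, so the local computation genuinely proves the intrinsic statement.

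For part $1$ I would work in the chart $\{Y\neq 0\}$ with affine coordinates $(u,v)=(X/Y,Z/Y)$, where \eqref{WNF} becomes $G=v-4u^3+auv^2+bv^3=0$. The locus $Z=0$ is $v=0$, which forces $u=0$, i.e. the single fibre point $(0:1:0)$; since $\partial G/\partial v=1+2auv+3bv^2$ equals $1$ there, the implicit function theorem yields a holomorphic solution $v=v(u,t)$, proving both that $\mathcal{W}$ is smooth along $Z=0$ and that $(0:1:0)$ is a holomorphic section. Part $2$ is then immediate: by part $1$ a singular point must have $Z\neq 0$, so I pass to the chart $\{Z\neq 0\}$ with $(x,y)=(X/Z,Y/Z)$ and $F=y^2-4x^3+ax+b$; the vanishing of $\partial F/\partial y=2y$ forces $y=0$, that is $Y=0$.

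The heart of the argument is parts $3$ and $4$, which I would treat together after setting $Y=0$, $Z=1$. The conditions $F=0$ and $\partial_x F=-12x^2+a=0$ give $a=12x^2$ and then $b=-8x^3$, whence $a^3=27b^2$, so every singular point lies over $D$. I would then split on whether $x=0$. If $x=0$ then $a(p)=b(p)=0$, and the base-direction conditions $\partial_j F=(\partial_j a)x+\partial_j b=\partial_j b$ reduce to $\partial_j b(p)=0$; with $b(p)=0$ this says exactly that $p\in A\cap B$ and that $B$ is singular at $p$, giving the equivalence of part $3$ (the converse obtained by running the same identities backwards at $(0:0:1)$). If $x\neq 0$ then $a(p)=12x^2\neq 0$ and $b(p)=-8x^3\neq 0$, so $p\notin A\cup B$; here the decisive computation is $\partial_j\Delta=3a^2\partial_j a-54b\,\partial_j b=432\,x^3\big(x\,\partial_j a+\partial_j b\big)$ at the point, which, since $x\neq 0$, vanishes for all $j$ if and only if the base-direction conditions $\partial_j F=x\,\partial_j a+\partial_j b=0$ hold. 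Combined with $\Delta(p)=0$ this is precisely the statement that $D$ is singular at $p$, yielding part $4$; finally $a=12x^2$ and $b=-8x^3$ give $x=-3b/2a$, i.e. $(X:0:Z)=(-3b:0:2a)$.

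I expect the main obstacle to be purely organisational rather than conceptual: keeping the two directions of each equivalence straight, and checking that the factor $x^3$ and the constant $432$ make the $\mathcal{W}$-singularity and $D$-singularity conditions coincide exactly, while verifying the converse implications by reconstructing the claimed singular point from the hypotheses on $A,B,D$. The Jacobian criterion does all of the real work, so beyond this bookkeeping the only point requiring care is the invariance under change of trivialisation flagged in the first paragraph.
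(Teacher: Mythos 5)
Your proposal is correct. Note that the paper itself does not prove this proposition at all: it is quoted verbatim from Miranda (\cite[Proposition 2.1]{Miranda1983}) and used as a black box, so there is no internal proof to compare against. Your argument --- trivialising $\mathcal{L}$ locally, checking invariance under change of trivialisation, and then applying the Jacobian criterion chart by chart --- is exactly the standard computation behind Miranda's statement, and the key identities all check out: in the chart $\{Y\neq 0\}$ the equation $v-4u^3+auv^2+bv^3=0$ with $v=0$ forces $u=0$ and $\partial_v G=1$ there; in the chart $\{Z\neq 0\}$ the conditions $\partial_xF=0$, $F=0$ give $a=12x^2$, $b=-8x^3$, hence $\Delta=a^3-27b^2=0$; and the identity $\partial_j\Delta=432\,x^3\left(x\,\partial_j a+\partial_j b\right)=432\,x^3\,\partial_jF$ at such a point makes the singularity of $D$ at $p$ equivalent to the vanishing of the base-direction derivatives precisely when $x\neq 0$, while the case $x=0$ reduces to $a(p)=b(p)=0$, $db(p)=0$. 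The converse directions indeed follow by reconstructing $x=-3b/2a$ and running the same identities backwards, so the proof is complete.
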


From Proposition \ref{prop:Sing}, we see that the singular fibres of the elliptic fibration $\pi_\mathcal{W}$ lie over the points of the divisor $D$ on the base space $S$. Hence, the {\it singular locus}, or the {\it{discriminant locus}}, of the elliptic fibration $\pi_\mathcal{W}\colon\mathcal{W}\to S$ coincides with the divisor $D$. 

Note that, given a holomorphic mapping $f: S'\to S$ from a complex manifold $S^{\prime}$ to $S$, we can pull back the holomorphic line bundle $\mathcal{L}$ through $f$, as well as the holomorphic sections $a$, $b$, and $\Delta$. 
The equation $(Y^{\prime})^{2}(Z^{\prime}) -4(X^{\prime})^{3}+f^{\ast}aX^{\prime}(Z^{\prime})^{2}+f^{\ast}b(Z^{\prime})^{3} =0$ defines another elliptic fibration  $\pi_{\mathcal{W}'}\colon\mathcal{W}'\to S'$ in Weierstra{\ss} normal form over $S^{\prime}$. 
In what follows, this technique is often used to obtain a suitable form of an elliptic fibration through the modification of the base space within the same bimeromorphic class.

If the base space $S$ is a complex curve, the types of singular fibres of the elliptic fibration $\pi_\mathcal{W}\colon\mathcal{W}\to S$ are classified with the conjugacy classes of the monodromy matrices in \cite{KodairaI-III}. 
As described at Table \ref{Kosaira's-list} in the next section, these types of singular fibres appear also in the case of the elliptic threefolds.

\subsection{Miranda's elliptic threefolds}\label{Miranda's elliptic threefolds}

In this subsection, we consider an elliptic fibration $\pi_{\mathcal{W}}: \mathcal{W}\rightarrow S$ in Weierstra{\ss} normal form over a complex surface $S$. 
We denote the reduced divisors of $A$, $B$, and $D$ on $S$ by $A_0$, $B_0$, and $D_0$, respectively. 
Blowing up the base space $S$ if necessary, we may suppose that the following conditions:
\begin{enumerate}[(A)]
  \item The reduced discriminant locus $D_0$ permits only nodes as its singularities. \label{condition A}
  \item The divisors $A_0$ and $B_0$ have only nodes as their singularities. \label{condition B}
\end{enumerate}

If there exists a locally defined holomorphic function $u$ on $S$ satisfying $u^4|a$ and $u^6|b$, we may replace $X$ and $Y$ by $u^2X$ and $u^3Y$, respectively. 
Repeating this procedure if necessary, we also assume the following condition: 
\begin{enumerate}[(C)]
  \item If $u^4|a$ and $u^6|b$, then $u$ is a unit.
\end{enumerate}

On the assumptions (\ref{condition A})--(C), we may choose a local coordinate system $(s_1,s_2)$ on $S$ centred at $p\in \mathrm{supp}(D)$ in terms of which $a$, $b$, and $\Delta$ are written as
\begin{align}\label{germs_of_sections}
a(s_1,s_2)=s_1^{L_1}s_2^{L_2}\,\overline{a}(s_1,s_2),~b(s_1,s_2)=s_1^{K_1}s_2^{K_2}\,\overline{b}(s_1,s_2),~\Delta(s_1,s_2)=s_1^{N_1}s_2^{N_2}\,\overline{\Delta}(s_1,s_2),
\end{align}
respectively, where the germs of $\overline{a}$, $\overline{b}$, and $\overline{\Delta}$ at $p$ are units. 
As we are concerned with the types of singular fibres, we focus on an open neighbourhood of each point in the base space and hence the holomorphic functions are identified with their germs below, not being particularly mentioned.

\paragraph{Singular fibres over smooth points of $D_0$.}

If $p\in \mathrm{supp} (D)$ is a smooth point of $D_0$, we may assume $N_2=0$ around $p$. 
Then, since $\Delta=a^3-27b^2$ is not divisible by $s_2$, we necessarily have $L_2=0$ or $K_2=0$ and hence the functions $a$, $b$, and $\Delta$ can be written as
\begin{align*}
  a=s_1^{L1}\,\overline{a},~b=s_1^{K_1}\,\overline{b},~\Delta=s_1^{N_1}\,\overline{\Delta}. 
  \end{align*}

Let $T$ be a complex curve on the base space $S$ such that $T$ and the line $s_1=0$ intersect transversally. Restricting the elliptic fibration $\pi_{\mathcal{W}}: \mathcal{W}\rightarrow S$ to the inverse image of the curve $T$ through $\pi_{\mathcal{W}}$, we can obtain an elliptic surface over the curve $T$. Then, the type of the singular fibre over an intersection point of the curve $T$ and the line $s_1=0$ is determined in Kodaira's classification.

This means that to determine the types of singular fibres over the line $s_1=0$, it is enough to consider the case that $s_2$ is constant. Therefore, the singular fibres appearing over the points of $D$ in a neighbourhood of $p$ are of the Kodaira type determined by the triple $\left( L_1,K_1,N_1 \right)$.
Table \ref{Kosaira's-list} shows the list of these singular fibres corresponding to the triple $(L, K, N)$.
See \cite[{\S} 7.]{Miranda1989} for the details.
\begin{center}
  \begin{longtable}[c]{|m{0.1\textwidth}|m{0.09\textwidth}|m{0.18\textwidth}|m{0.3\textwidth}|}
  \caption{Kodaira's list}
  \label{Kosaira's-list}\\
  \hline
  Kodaira's notation & Dynkin diagram & $(L,K,N)$ types & types of singular fibres\\
  \hline
  \hline
  $I_0$ & -- & $(L\geq0,0,0)$ \par \hspace{2em}$\text{or}$ \par $(0,K\geq0,0)$  & Smooth elliptic curve  \\
  \hline
  $I_1$ & $A_0$ & $(0,0,1)$ &
  \begin{align*}
    \hspace{-1em}
    \underset{\text{Nodal rational curve}}{
    \begin{tikzpicture}[x=0.3cm, y=0.3cm] 
        \draw[line width=1pt] (0,0) .. controls (0,-1) and (1,-2) .. (2,-1);
        \draw[line width=1pt] (2,-1) --  (5,2);
        \draw[line width=1pt] (0,0) .. controls (0,1) and (1,2) .. (2,1);
        \draw[line width=1pt] (2,1) -- (5,-2);
    \end{tikzpicture} 
    }
    \end{align*}
    \\
  \hline
  $I_N$ & $A_{N-1}$ & $(0,0,N\geq 2)$ &
  \begin{align*}
    \hspace{-1em}
    \underset{\hspace{1em}\text{cycle of N smooth rational curve}}{
    \begin{tikzpicture}[x=0.4cm, y=0.4cm]
        \draw[line width=1pt] (-0.25,-0.25) -- (1,1.75);
        \draw[line width=1pt] (-0.25,0.25) -- (1,-1.75);
        \draw[line width=1pt] (0.5,1.5) -- (3,1.5);
        \draw[line width=1pt] (2.5,1.75) -- (3.75,-0.25);
        \draw[dashed] (3.75,0.25) -- (2.5,-1.75);
        \draw[dashed] (0.5,-1.5) -- (3,-1.5);
    \end{tikzpicture}
    }
  \end{align*}
  \\
  \hline
  $I_0^*$ & $D_4$ & $(L\geq2,K\geq3,6)$ &
  \begin{align*}
    \begin{tikzpicture}[x=0.6cm, y=0.6cm]
        \draw[line width=1pt] (-1.25,0) -- (1.25,0) node[right] {$2$};
        \draw[line width=1pt] (-0.75,0.7) -- (-0.75,-0.7) node[below] {$1$};
        \draw[line width=1pt] (-0.25,0.7) -- (-0.25,-0.7)node[below] {$1$};
        \draw[line width=1pt] (0.75,0.7) -- (0.75,-0.7)node[below] {$1$};
        \draw[line width=1pt] (0.25,0.7) -- (0.25,-0.7)node[below] {$1$};
    \end{tikzpicture}
  \end{align*}
  \\
  \hline
  $I_{N-6}^*$ & $D_{N-2}$ & $(2,3,N\geq7)$ &
  \begin{align*}
    \underbrace{
    \begin{tikzpicture}[x=0.8cm, y=0.8cm]
        \draw[line width=1pt] (0.25,0) -- (-0.75,1) node[midway,left] {$1$};
        \draw[line width=1pt] (0.5,0.25) -- (-0.5,1.25) node[midway,above] {$1$};
        \draw[line width=1pt] (0,0) -- (1,1) node[midway,right] {$2$};
        \draw[line width=1pt] (0.75,1) -- (1.75,0) node[midway,right] {$2$};
        \fill[black] (2,0.5) circle(0.05) (2.25,0.5) circle(0.05) (2.5,0.5) circle(0.05);
        \draw[line width=1pt] (2.75,0) -- (3.75,1) node[midway,left] {$2$};
        \draw[line width=1pt] (3.5,1) -- (4.5,0) node[midway,left] {$2$};
        \draw[line width=1pt] (4,0.25) -- (5,1.25) node[midway,above] {$1$};
        \draw[line width=1pt] (4.25,0) -- (5.25,1) node[midway,right] {$1$};
    \end{tikzpicture}
    }_{\text{$N-5$ multiplicity $2$ components}}
  \end{align*} 
  \\
  \hline
  $II$ & -- & $(L\geq1,1,2)$ & 
        \begin{align*}
          \hspace{-1em}
          \underset{\hspace{1em}\text{Cuspidal rational curve}}{
          \begin{tikzpicture}[x=0.4cm, y=0.4cm] 
              \draw[line width=1pt] (0,0) .. controls (1,0) and (2,1) .. (2,2); 
              \draw[line width=1pt] (0,0) ..controls (1,0) and (2,-1) .. (2,-2);
          \end{tikzpicture}
          }
        \end{align*}
    \\
  \hline
  $III$ & $A_1$ & $(1,K\geq2,3)$ &
  \begin{align*}
    \hspace{-0.5em}
    \begin{tikzpicture}[x=0.4cm, y=0.4cm]
        \draw[line width=1pt] (-1,1) .. controls (-0.5,1) and (0,0.5) ..  (0,0) ;
        \draw[line width=1pt] (-1,-1) node[left] {$1$} .. controls (-0.5,-1) and (0,-0.5) .. (0,0);
        \draw[line width=1pt] (1,1) .. controls (0.5,1) and (0,0.5) .. (0,0);
        \draw[line width=1pt] (1,-1) node[right] {$1$} .. controls (0.5,-1) and (0,-0.5) .. (0,0);
    \end{tikzpicture}
  \end{align*}
  \\
  \hline
  $IV$ & $A_2$ & $(L\geq2,2,4)$ &
  \begin{align*}
    \hspace{-1.5em}
    \begin{tikzpicture}[x=0.4cm, y=0.4cm]
        \draw[line width=1pt] (0,1) -- (0,-1) node[below] {$1$};
        \draw[line width=1pt] (-1,1) node[left] {$1$} -- (1,-1) ;
        \draw[line width=1pt] (-1,-1)node[left] {$1$} -- (1,1);
    \end{tikzpicture}
  \end{align*}
  \\
  \hline
  $IV^*$ & $E_6$ & $(L\geq3,4,8)$ &
  \begin{align*}
    \hspace{-0.5em}
    \begin{tikzpicture}[x=0.7cm, y=0.7cm]
        \draw[line width=1pt] (-1.25,0) -- (1.25,0);
        \draw[line width=1pt] (-1,0.25) -- (-1,-0.75)node[midway,left] {$2$};
        \draw[line width=1pt] (-1.75,-0.5) node[below] {$1$} -- (-0.75,-0.5)
        ;
        \draw[line width=1pt] (1,0.25) -- (1,-0.75) node[midway,right] {$2$};
        \draw[line width=1pt] (1.75,-0.5) node[below] {$1$}-- (0.75,-0.5);
        \draw[line width=1pt] (0,-0.25) node[right] {$3$} -- (0,1)node[midway, left] {$2$};
        \draw[line width=1pt] (-0.5,0.75) node[left] {$1$}   -- (0.5,0.75);
    \end{tikzpicture}
  \end{align*} 
  \\
  \hline
  $III^*$ & $E_7$ & $(3,K\geq5,9)$ &
  \begin{align*}
    \hspace{-0.5em}
    \begin{tikzpicture}[x=0.8cm, y=0.8cm]
        \draw[line width=1pt] (-1.15,0) -- (1.15,0);
        \draw[line width=1pt] (-1,0.15) -- (-1,-0.75)node[midway,left] {$3$};
        \draw[line width=1pt] (-1.75,-0.6)node[left] {$2$} -- (-0.85,-0.6);
        \draw[line width=1pt] (-1.6,-0.45) -- (-1.6,-1.35)node[left] {$1$};
        \draw[line width=1pt] (1,0.15) -- (1,-0.75)node[midway,right] {$3$};
        \draw[line width=1pt] (1.75,-0.6)node[right] {$2$} -- (0.85,-0.6);
        \draw[line width=1pt] (1.6,-0.45) -- (1.6,-1.35)node[right] {$1$};
        \draw[line width=1pt] (0,-0.25)node[right] {$4$} -- (0,0.65)node[left] {$2$};
    \end{tikzpicture}
  \end{align*}
  \\
  \hline
  $II^*$ & $E_8$ & $(L\geq4,5,10)$ &
  \begin{align*}
    \hspace{-2.5em}
    \begin{tikzpicture}[x=0.8cm, y=0.8cm]
        \draw[line width=1pt] (-1.15,0) -- (1.15,0);
        \draw[line width=1pt] (-1,0.15) -- (-1,-0.75)node[midway,left] {$5$};
        \draw[line width=1pt] (-1.75,-0.6)node[left] {$4$} -- (-0.85,-0.6);
        \draw[line width=1pt] (-1.6,-0.45) -- (-1.6,-1.35)node[midway,right] {$3$};
        \draw[line width=1pt] (-1.45,-1.2) -- (-2.35,-1.2)node[left] {$2$};
        \draw[line width=1pt] (-2.2,-1.05) -- (-2.2,-1.95)node[left] {$1$};
        \draw[line width=1pt] (1,0.15) -- (1,-0.75)node[midway,right] {$4$};
        \draw[line width=1pt] (1.75,-0.6)node[below] {$2$} -- (0.85,-0.6);
        \draw[line width=1pt] (0,-0.25)node[right] {$6$} -- (0,0.65)node[left] {$3$};
    \end{tikzpicture}
  \end{align*}
  \\
  \hline
  \end{longtable}
\end{center}

\paragraph{Singular fibres over singular points of $D_0$.}
Assume that $D_0$ is singular at $p$ and take a local coordinate $(s_1,s_2)$ centred at $p$.
As $(s_1,s_2)=(0,0)$ is a singular point of $D_0$, we obtain $N_1>0,\, N_2>0$ in the description \eqref{germs_of_sections}.
In this case, the reduced discriminant locus $D_0$ is defined through the equation $s_1s_2=0$ around $p$.

Now the total space $\mathcal{W}$ of the elliptic fibration is locally described through the equation
\begin{equation}\label{modified-WNF}
  Y^2Z = 4X^3-s_1^{L_1}s_2^{L_2}\,\overline{a}\,XZ^2-s_1^{K_1}s_2^{K_2}\,\overline{b}\,Z^3,
\end{equation}
whose discriminant is written as 
\begin{align}
  \Delta=s_1^{N_1}s_2^{N_2}\,\overline{\Delta}. \label{modified-discriminant}
\end{align}
From the above argument, the types of singular fibres over the lines $s_1=0$ and $s_2=0$ except for the node $s_1=s_2=0$ are determined by the triples $\left( L_1,K_1,N_1 \right)$ and $\left( L_2,K_2,N_2 \right)$, respectively.

\medskip
We consider a blowing-up $\sigma\colon S_1\to S$ of $S$ with the centre at $p$ and the pull-back $\pi_1\colon\mathcal{W}_1\to S_1$ of the fibration $\pi\colon\mathcal{W}\to S$ through $\sigma$. 
The singular fibres of $\pi_1$ over the exceptional divisor $E$ of $\sigma$ are described as in the following proposition. For the proof, see \cite{Miranda1983}.

\begin{proposition}{\rm{\cite[Proposition 9.1]{Miranda1983}}}\label{prop:types-over-E}
  Assume that the defining equation of $\mathcal{W}$ and the discriminant around p are written as in the forms {\rm{(\ref{modified-WNF})}} and {\rm{(\ref{modified-discriminant})}}. 
  Then, after a change of the local base and fibre coordinates of $\pi_1$ if necessary, we can assume that the condition {\rm{(C)}} is satisfied. 
  As a result, the types of the singular fibres of $\pi_1$ over the exceptional divisor $E$ in $S_1$ are determined by the triple $(L_1+L_2,K_1+K_2,N_1+N_2)$ modulo $(4,6,12)$.
\end{proposition}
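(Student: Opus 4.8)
The plan is to reduce the statement to a direct local computation in the two standard affine charts of the blow-up $\sigma\colon S_1\to S$ at $p$, exploiting the observation—already used for singular fibres over smooth points of $D_0$—that once condition (C) holds, the Kodaira type of the fibre over a smooth point of the reduced discriminant is read off from the triple of vanishing orders of $a$, $b$, $\Delta$ along that component. A generic point of the exceptional divisor $E$ is a smooth point of the reduced discriminant of the pulled-back fibration $\pi_1$ (the nodal intersection and the proper transforms of $s_1=0$ and $s_2=0$ meet $E$ only at special points), so it suffices to compute these three vanishing orders along $E$.

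First I would write the blow-up in charts. In the chart with $s_1=s_1$, $s_2=s_1 v$, the exceptional divisor is $E=\{s_1=0\}$ and the proper transform of $\{s_2=0\}$ is $\{v=0\}$, so a generic point of $E$ has $v\neq 0$. Pulling back the sections written in \eqref{germs_of_sections} gives
\begin{align*}
  \sigma^{\ast} a = s_1^{L_1+L_2}\,v^{L_2}\,\overline{a},\quad
  \sigma^{\ast} b = s_1^{K_1+K_2}\,v^{K_2}\,\overline{b},\quad
  \sigma^{\ast}\Delta = s_1^{N_1+N_2}\,v^{N_2}\,\overline{\Delta}.
\end{align*}
Since $v$, $\overline{a}$, $\overline{b}$, $\overline{\Delta}$ are units at a generic point of $E$, the orders of vanishing of $\sigma^{\ast} a$, $\sigma^{\ast} b$, $\sigma^{\ast}\Delta$ along $E$ are exactly $L_1+L_2$, $K_1+K_2$, $N_1+N_2$. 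The second chart $s_1=u s_2$, $s_2=s_2$ is symmetric, giving $\sigma^{\ast}a=u^{L_1}s_2^{L_1+L_2}\overline{a}$ and so on, hence the same orders along $E=\{s_2=0\}$; the two charts are therefore consistent.

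Next I would account for the reduction modulo $(4,6,12)$. If $L_1+L_2\geq 4$ and $K_1+K_2\geq 6$, then $s_1^4\mid\sigma^{\ast}a$ and $s_1^6\mid\sigma^{\ast}b$ along $E$, so condition (C) fails with $u=s_1$. Following the prescription preceding condition (C), I replace $X,Y$ by $s_1^2 X,\, s_1^3 Y$ in the pulled-back Weierstra{\ss} equation, which replaces $\sigma^{\ast}a$, $\sigma^{\ast}b$, $\sigma^{\ast}\Delta$ by $\sigma^{\ast}a/s_1^4$, $\sigma^{\ast}b/s_1^6$, $\sigma^{\ast}\Delta/s_1^{12}$, i.e. subtracts $(4,6,12)$ from the triple. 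Repeating until the first entry is $<4$ or the second is $<6$ produces a Weierstra{\ss} model along $E$ satisfying (C), and the resulting reduced triple is $(L_1+L_2,K_1+K_2,N_1+N_2)$ taken modulo $(4,6,12)$. Finally, exactly as in the smooth-point case, I restrict $\pi_1$ to a curve meeting $E$ transversally at a generic point to obtain an elliptic surface whose singular fibre over that point is the Kodaira fibre dictated by the reduced triple via Table \ref{Kosaira's-list}.

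The main obstacle I anticipate is not the bookkeeping of exponents but the verification that the coordinate change enforcing (C) is a legitimate modification near $E$ and does not alter the fibre type over the generic point: one must check that after each substitution the new coefficients $\overline{a}\,v^{L_2-4}$, $\overline{b}\,v^{K_2-6}$ (and their iterates in the second chart) remain holomorphic and generically nonvanishing in a neighbourhood of $E$, so that the step is an admissible Weierstra{\ss} isomorphism rather than a birational change affecting the generic fibre. Because $v$ (respectively $u$) is a unit at generic points of $E$, these coefficients remain units there, which is precisely what makes the reduction modulo $(4,6,12)$ well defined and confirms that the fibre type over a generic point of $E$ depends only on the reduced triple.
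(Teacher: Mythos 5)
Your proof is correct and follows essentially the same route as the proof this paper relies on: the paper itself gives no argument (it defers to Miranda's Proposition 9.1), and Miranda's proof is precisely your chart computation showing the vanishing orders along $E$ add to $(L_1+L_2,K_1+K_2,N_1+N_2)$, followed by the twist $X\mapsto s_1^2X$, $Y\mapsto s_1^3Y$ enforcing condition (C), which subtracts $(4,6,12)$, and a final appeal to the Kodaira table along a transversal curve. One harmless notational slip: after dividing by $s_1^4$ the unit part of $\sigma^*a$ in the first chart is $v^{L_2}\,\overline{a}$, not $v^{L_2-4}\,\overline{a}$ (the exponent of $v$ is unchanged by the twist), but since $v$ is a unit at generic points of $E$ this does not affect your conclusion.
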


In \cite{Miranda1983}, after suitable blowing-ups of $S$ and a resolution $\mathcal{\widehat{\mathcal{W}}}\to\mathcal{W}$ of singularities of the total space, a classification is carried out for the possible collisions of Kodaira's fibre types between the irreducible components $s_1=0$ and $s_2=0$ of $D_0$ in a neighbourhood of $p$. 
The singular fibres of the elliptic fibration $\pi_{\widehat{\mathcal{W}}}\colon\widehat{\mathcal{W}}\to S$ over the collision points are described for each colliding type as in Table \ref{Miranda's list}.

\begin{remark}
  It should be emphasized that, according to Miranda's classification, the singular fibres over the double points of $D_0$ are obtained by contracting some of components of the singular fibres of Kodaira's types as is indicated in the last column of Table \ref{Miranda's list}.
\end{remark}

\newpage
\begin{center}
  \begin{longtable}[c]{|m{0.14\textwidth}|m{0.18\textwidth}|m{0.16\textwidth}|m{0.16\textwidth}|}
  \caption{List of Miranda's singular fibres}
  \label{Miranda's list}\\
  \hline
  colliding types & \hspace{-13ex}fibre of $\pi_{\widehat{\mathcal{W}}}$ over $p$ & corresponding\par Kodaira's types & contracted\par components\\
  \hline
  \hline
  $I_{M_1}+I_{M_2}$ 
  &
  \begin{align*}
  \hspace{-3em}
    \underset{\hspace{1em}\text{cycle of $(M_1+M_2)$ smooth  rational curves}}{
    \begin{tikzpicture}[x=0.35cm, y=0.35cm]
        \draw[line width=1pt] (-0.25,-0.25) -- (1,1.75);
        \draw[line width=1pt] (-0.25,0.25) -- (1,-1.75);
        \draw[line width=1pt] (0.5,1.5) -- (3,1.5);
        \draw[line width=1pt] (2.5,1.75) -- (3.75,-0.25);
        \draw[dashed] (3.75,0.25) -- (2.5,-1.75);
        \draw[dashed] (0.5,-1.5) -- (3,-1.5);
    \end{tikzpicture}
    }
  \end{align*}
  & $I_{M_1+M_2}$ & none  \\
  \hline
  $I_{M_1}+I_{M_2}^*$ \par \vspace{0.5em} ($M_1$: even) 
  &
  \begin{align*}
    \underbrace{
    \begin{tikzpicture}[x=0.8cm, y=0.8cm]
        \draw[line width=1pt] (0.25,0) -- (-0.75,1) node[midway,left] {$1$};
        \draw[line width=1pt] (0.5,0.25) -- (-0.5,1.25) node[midway,above] {$1$};
        \draw[line width=1pt] (0,0) -- (1,1) node[midway,right] {$2$};
        \draw[line width=1pt] (0.75,1) -- (1.75,0) node[midway,right] {$2$};
        \fill[black] (2,0.5) circle(0.05) (2.25,0.5) circle(0.05) (2.5,0.5) circle(0.05);
        \draw[line width=1pt] (2.75,0) -- (3.75,1) node[midway,left] {$2$};
        \draw[line width=1pt] (3.5,1) -- (4.5,0) node[midway,left] {$2$};
        \draw[line width=1pt] (4,0.25) -- (5,1.25) node[midway,above] {$1$};
        \draw[line width=1pt] (4.25,0) -- (5.25,1) node[midway,right] {$1$};
    \end{tikzpicture}
    }_{\text{$\left( M_2+\frac{M_1}{2}+1 \right)$ components with multiplicity $2$}}
  \end{align*} 
  & $I_{M_1+M_2}^*$ & {$\frac{M_1}{2}$ components with multiplicity $2$} 
    \\
  \hline
  $I_{M_1}+I_{M_2}^*$ \par \vspace{0.5em}($M_1$: odd) 
  & 
  \begin{align*}
    \underbrace{
    \begin{tikzpicture}[x=0.9cm, y=0.9cm]
        \draw[line width=1pt] (0.25,0) -- (-0.75,1) node[midway,left] {$1$};
        \draw[line width=1pt] (0.5,0.25) -- (-0.5,1.25) node[midway,above] {$1$};
        \draw[line width=1pt] (0,0) -- (1,1) node[midway,right] {$2$};
        \draw[line width=1pt] (0.75,1) -- (1.75,0) node[midway,right] {$2$};
        \fill[black] (2,0.5) circle(0.05) (2.25,0.5) circle(0.05) (2.5,0.5) circle(0.05);
        \draw[line width=1pt] (2.75,0) -- (3.75,1) node[midway,left] {$2$};
        \draw[line width=1pt] (3.5,1) -- (4.5,0) node[midway,left] {$2$};
    \end{tikzpicture}
    }_{\text{$\left( M_2+\frac{M_1-1}{2}+1 \right)$ components with multiplicity $2$}}
  \end{align*}
  & $I_{M_1+M_2}^*$ & {$\frac{M_1-1}{2}$ components with multiplicity $2$ and $2$ components with multiplicity $1$}
  \\
  \hline
  $II+IV$ 
  &
  \begin{align*}
   \hspace{-8em}
    \begin{tikzpicture}[x=0.9cm, y=0.9cm]
        \draw[line width=1pt] (-1.25,0) -- (1.25,0) node[right] {$2$};
        \draw[line width=1pt] (-0.75,0.7) -- (-0.75,-0.7) node[below] {$1$};
    \end{tikzpicture}
  \end{align*}
  & $I_0^*$ & $3$ {components with multiplicity 1}
  \\
  \hline
  $II+I_0^*$ 
  & 
  \begin{align*}
  \hspace{-9.5em}
    \begin{tikzpicture}
        \draw[line width=1pt] (-1.25,0)node[left] {$3$} -- (1.25,0);
        \draw[line width=1pt] (0,-0.25)  -- (0,1)node[midway, left] {$2$};
        \draw[line width=1pt] (-0.5,0.75) node[left] {$1$}   -- (0.5,0.75);
    \end{tikzpicture}
  \end{align*}
  & 
  $IV^*$ 
  &two of the three
  \begin{align*}
    \begin{tikzpicture}[x=0.4cm, y=0.4cm]
        \draw[line width=1pt] (-1.2,0) -- (1.2,0) node[right] {$1$};
        \draw[line width=1pt] (0,-1) -- (0,1) node[above] {$2$};
    \end{tikzpicture}
  \end{align*}
  components
  \\
  \hline
  $II+IV^*$ 
  & 
  \begin{align*}
  \hspace{-9em}
    \begin{tikzpicture}
        \draw[line width=1pt] (-1.15,-0.55) -- (-1.7,-1.35)node[midway,left] {$3$};
        \draw[line width=1pt] (-1.45,-1.2) -- (-2.35,-1.2)node[left] {$2$};
        \draw[line width=1pt] (-2.2,-1.05) -- (-2.2,-1.95)node[left] {$1$};
        \draw[line width=1pt] (-1.3,-0.55) -- (-0.7,-1.35)node[midway,right] {$4$};
        \draw[line width=1pt] (-0.1,-1.2)node[below] {$2$} -- (-0.95,-1.2);
    \end{tikzpicture}
  \end{align*}
  & 
  $II^*$ 
  & 
  \begin{align*}
    \begin{tikzpicture}
        \draw[line width=1pt] (-1.15,0) -- (0,0) node[right] {$6$};
        \draw[line width=1pt] (-1,0.15) -- (-1,-0.75)node[midway,left] {$5$};
        \draw[line width=1pt] (-1.75,-0.6)node[left] {$4$} -- (-0.85,-0.6);
        \draw[line width=1pt] (-0.25,-0.25) -- (-0.25,0.65)node[left] {$3$};
    \end{tikzpicture}
  \end{align*}
  components
        
    \\
  \hline
  $IV+I_0^*$ 
  & 
  \begin{align*}
  \hspace{-8em}
    \begin{tikzpicture}
        \draw[line width=1pt] (-1.4,-1.2) -- (-2.35,-1.2)node[midway,above] {$2$};
        \draw[line width=1pt] (-2.2,-1.05) -- (-2.2,-1.95)node[midway,left] {$1$};
        \draw[line width=1pt] (-1.55,-1.05) -- (-1.55,-1.95)node[midway,right] {$4$};
        \draw[line width=1pt] (-1.7,-1.8) -- (-0.8,-1.8) node[right] {$2$};
    \end{tikzpicture}
  \end{align*}
  & 
  $II^*$ 
  &
  \begin{align*}
    \begin{tikzpicture}
        \draw[line width=1pt] (-1.15,0) -- (0,0) node[right] {$6$};
        \draw[line width=1pt] (-1,0.15) -- (-1,-0.75)node[midway,left] {$5$};
        \draw[line width=1pt] (-1.75,-0.6)node[left] {$4$} -- (-0.85,-0.6);
        \draw[line width=1pt] (-1.6,-0.45) -- (-1.6,-1.35)node[right] {$3$};
        \draw[line width=1pt] (-0.25,-0.25) -- (-0.25,0.65)node[left] {$3$};
    \end{tikzpicture}
  \end{align*}
  components
  \\
  \hline
  $III+I_0^*$ 
  & 
  \begin{align*}
  \hspace{-8em}
    \begin{tikzpicture}
        \draw[line width=1pt] (0.25,0) -- (1.15,0) node[midway,above] {$2$};
        \draw[line width=1pt] (0.4,0.15) -- (0.4,-0.75)node[left] {$1$};
        \draw[line width=1pt] (1,0.15) -- (1,-0.75)node[midway,right] {$3$};
        \draw[line width=1pt] (1.75,-0.6)node[right] {$2$} -- (0.85,-0.6);
        \draw[line width=1pt] (1.6,-0.45) -- (1.6,-1.35)node[right] {$1$};
    \end{tikzpicture}
  \end{align*}
  & 
  $III^*$ 
  &
  \begin{align*}
    \begin{tikzpicture}
        \draw[line width=1pt] (-1.15,0) -- (0.15,0)node[right] {$4$};
        \draw[line width=1pt] (-1,0.15) -- (-1,-0.75)node[midway,left] {$3$};
        \draw[line width=1pt] (-0.25,-0.25) -- (-0.25,0.65)node[left] {$2$};
    \end{tikzpicture}
  \end{align*}
  components
  \\
  \hline
  \end{longtable}
\end{center}

\section{The elliptic fibration associated with the Lagrange top}

In this section, we consider an elliptic fibration in Weierstra{\ss} normal form naturally constructed in association to the Lagrange top. 
The geometry of the elliptic fibration is thoroughly studied from the viewpoint of Miranda's elliptic threefolds. 
In view of this complex algebro-geometric framework, all the settings in Section 2 are now complexified in this section. 
In Subsection 4.1, we consider a family of complex plane cubic curves arising from the Lagrange top and then compactify it as an elliptic fibration over $\mathbb{CP}^2$, denoted as $\pi_W: W\rightarrow \mathbb{CP}^2$. 
This elliptic threefold has singularities as discussed concretely in Subsection 4.2. 
In Subsection 4.3, we find the smooth elliptic fibration $\pi_{\widehat{\mathcal{W}}}: \widehat{\mathcal{W}}\rightarrow \widehat{\mathbb{CP}^2}$ which is bimeromorphic to the singular elliptic fibration and satisfies the conditions in Miranda's study on elliptic threefolds as in the last section. 
It requires the modifications of the total space and the base plane of the elliptic fibration. 
Subsection 4.4 deals with the monodromy of the elliptic fibration. 
The elliptic fibrations appearing in their sections are related to each other in Figure 1. 

\begin{figure}[H]
  \centering
  \caption{elliptic fibrations in Section 4.}
  \begin{tikzcd}
    \mathcal{O}_a \arrow[d,"\mathcal{E}\!\mathcal{M}^{\CC}"]\arrow[r,hookrightarrow] & W \arrow[r,dashed]\arrow[d,"\pi_W"] & \widehat{\mathcal{W}} \arrow[d,"\pi_{\widehat{\mathcal{W}}}"] \\
    \CC^2 \arrow[r,hookrightarrow] & \CC\PP^2 \arrow[r,dashed] & \widehat{\CC\PP^2}
  \end{tikzcd}
\end{figure}

\subsection{A family of complex elliptic curves and its compactification}\label{Formulation of W}
We complexify all the settings in Section 2 for the convenience of the subsequent analysis. 
Namely, the four polynomials $H_1$, $H_2$, $H_3$, $H_4$ are naturally extended to those on $\mathbb{C}^3\times\mathbb{C}^3$, where the parameters $m$ and $a$ are also regarded as complex numbers. 
The Lie algebraic and the Poisson structures are also holomorphically extended to $\mathbb{C}^3\times\mathbb{C}^3$ and, as a result, the generic coadjoint orbits $\mathcal{O}_a^{\mathbb{C}}$ are regarded as complex affine algebraic varieties of dimension four, which can be characterized as the intersections of two quadric hypersurfaces $H_1(\Gamma, \Omega)=1$, $H_2(\Gamma, \Omega)=a$, where $a\in \mathbb{C}$. 
The energy-momentum mapping is also complexified and denoted as $\mathcal{E}\!\mathcal{M}^{\CC}: \mathcal{O}_a^{\mathbb{C}}\ni (\Gamma, \Omega)\mapsto (H_1(\Gamma, \Omega), H_2(\Gamma, \Omega))\in \mathbb{C}^2$. 
The holomorphic Hamiltonian flow associated to $H_4$ induces a free action of $\mathbb{C}^{\ast}\cong \mathbb{C}/2\pi \sqrt{-1}$ on each regular fibre of $\mathcal{E}\!\mathcal{M}^{\CC}$. 
See \cite[Appendix]{Gavrilov-Zhivkov} for more details of the description of the $\mathbb{C}^{\ast}$-action. 
The quotient curve of this action can still be written as \eqref{cubic curve}, which defines a complex elliptic curve in $\mathbb{C}^2$ with $h_3$, $h_4$ being complex parameters. 
For the convenience of the later calculations, we introduce the transformation of the parameters $\tau: \mathbb{C}^2\ni (h_3, h_4)\mapsto (a_1, a_2)\in \mathbb{C}^2$ defined by 
\begin{align*}
  a_1=2\left( 1+m \right)h_4,~a_2=2h_3+\left( 1+m \right)mh_4^2. 
\end{align*}
Setting $\alpha=-2a$ and replacing $x-\displaystyle \dfrac{a_2}{12}$ by $x$, we obtain an affine cubic curve $$\mathcal{C}_{(a_1,a_2)}\cong\left(\mathcal{E}\!\mathcal{M}^{\CC}\right)^{-1}(h_3,h_4)/\CC^*$$ in Weierstra{\ss} normal form as
\begin{align}
y^2=4x^3-g_2x-g_3,\label{eq_Weierstrass}
\end{align}
where
\begin{align}
  g_2=1+\frac{a_2^2}{12}-\frac{\alpha}{4}a_1,~g_3=\frac{a_2^3}{216}+\frac{a_1^2}{16}-\frac{\alpha}{48}a_1a_2-\frac{1}{6}a_2+\frac{\alpha^2}{16}.\label{g_2 g_3}
\end{align}
The family of the elliptic curve $\mathcal{C}_{(a_1,a_2)}$ with the parameter $(a_1, a_2)\in \mathbb{C}^2$ can be compactified by an elliptic fibration in Weierstra{\ss} normal form. 

We compactify the family $\left\{\mathcal{C}_{(a_1,a_2)}\right\}_{(a_1, a_2)\in \mathbb{C}^2}$ of plane curves by an elliptic fibration in Weierstra{\ss} normal form over $\mathbb{CP}^2$. 
Consider the hyperplane line bundle $\mathcal{L}:=\mathcal{O}_{\mathbb{CP}^2}(1)$ with the first Chern number $1$ over the complex projective plane $\mathbb{CP}^2$ whose homogeneous coordinates are $\left(A_0: A_1: A_2\right)$. 
On each Zariski open subset $U_i:=\left\{(A_0: A_1: A_2)\in\mathbb{CP}^2\mid A_i\neq 0\right\}$, $i=0,1,2$, we consider the  holomorphic functions 
\begin{align}
\left(g_2^{\ast}\right)_i = \frac{1}{A_i^4}\Phi\left(A_0,A_1,A_2\right),\;\left(g_3^{\ast}\right)_i = \frac{1}{A_i^6}\Psi\left(A_0,A_1,A_2\right),
\end{align}
where 
\begin{align}
&\Phi\left(A_0,A_1,A_2\right)=A_0^2\left(A_0^2 +\frac{1}{12}A_2^2 - \frac{\alpha}{4}A_0A_1\right), \label{Phi}
\\[4pt]
&\Psi\left(A_0,A_1,A_2\right)=A_0^3\left(\frac{1}{216}A_2^3 + \frac{1}{16}A_0A_1^2 -\frac{\alpha}{48}A_0A_1A_2 -\frac{1}{6}A_0^2A_2 +\frac{\alpha^2}{16}A_0^3\right).\label{Psi}
\end{align}
Clearly, the collections $g_2^{\ast}=\left(\left(g_2^{\ast}\right)_i\right)_{i=0,1,2}$ and $g_3^{\ast}=\left(\left(g_3^{\ast}\right)_i\right)_{i=0,1,2}$ are holomorphic sections of $\mathcal{L}^{\otimes 4}$ and $\mathcal{L}^{\otimes 6}$ over $\mathbb{CP}^2$, respectively. 
Then, using the general framework of the elliptic fibrations in Weierstra{\ss} normal form in Subsection 3.1, we define the Weierstra{\ss} normal form $W\subset P\left(\mathcal{L}^{\otimes2}\oplus\mathcal{L}^{\otimes3}\oplus\mathcal{O}_{\CC\PP^2}\right)$ defined through 

\begin{align}
Y^2Z = 4 X^3 -g_2^* X Z^2 -g_3^* Z^3. \label{eq_WM}
\end{align}
As in the previous section, $(X:Y:Z)$ are the homogeneous fibre coordinates of the $\mathbb{CP}^2$-bundle $P\left(\mathcal{L}^{\otimes2}\oplus\mathcal{L}^{\otimes3}\oplus\mathcal{O}_{\CC\PP^2}\right)$. 
The restriction $\pi_W\colon W\to\CC\PP^2$ of the canonical projection $\pi\colon P\left(\mathcal{L}^{\otimes2}\oplus\mathcal{L}^{\otimes3}\oplus\mathcal{O}_{\CC\PP^2}\right)\to\CC\PP^2$ to $W$ is an elliptic fibration in Weierstra{\ss} normal form. 
On the base plane $\mathbb{CP}^2$, the divisors defined through the equation $g_2^{\ast}=0$ and $g_3^{\ast}=0$ are denoted by $G_2$ and $G_3$, respectively. 
We also consider the discriminant $\Delta=\left(g_2^*\right)^3-27\left(g_3^*\right)^2\in H^0\left(\mathbb{CP}^2, \mathcal{L}^{\otimes 12}\right)$ and the functional invariant $J=\frac{\left(g_2^*\right)^3}{\Delta}=\frac{\left(g_2^*\right)^3}{\left(g_2^*\right)^3-27\left(g_3^*\right)^2}$, which is a rational function on $\mathbb{CP}^2$. 
Note that $\left(g_2^{\ast}\right)_0=g_2$ and $\left(g_2^{\ast}\right)_0=g_3$ in \eqref{g_2 g_3} and that we find again the equation \eqref{eq_Weierstrass}, setting $(x,y)=(X/Z, Y/Z)$ and $(a_1, a_2)=(A_1/A_0, A_2/A_0)$. 
Thus, the elliptic fibration $\pi_W: W\rightarrow \mathbb{CP}^2$ is regarded as the compactification of the family $\left\{\mathcal{C}_{(a_1,a_2)}\right\}_{(a_1, a_2)\in \mathbb{C}^2}$. 

\subsection{Singularities of $W$ and singular locus of $\pi_W$}

In this subsection, we determine the singularities of the threefold $W$ and analyse the singular locus of the elliptic fibration $\pi_W\colon W\rightarrow \mathbb{CP}^2$ defined as in \eqref{eq_WM}. 

\begin{proposition}
  The set of singularities of $W$ is given as
  \begin{align*}
    {\rm{Sing}}(W) &= \left\{\left( \left( -\frac{\alpha^2}{48}\pm\frac{1}{3}:0:1 \right);\left( 1:\pm\alpha:\frac{\alpha^2}{4}\pm2 \right) \right)\in W\right\} \notag
    \\[4pt]
    &\hspace{15em}\cup\Set{\left( \left( 0:0:1 \right);\left( 0:A_1:A_2 \right) \right)\in W| \left(A_1: A_2\right)\in\CC\PP^1},
  \end{align*}
  where $\left( \left( X:Y:Z \right);\left( A_0:A_1:A_2 \right) \right)$ denotes a point of $P\left(\mathcal{L}^{\otimes2}\oplus\mathcal{L}^{\otimes3}\oplus\mathcal{O}_{\CC\PP^2}\right)$.
\end{proposition}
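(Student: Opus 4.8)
The strategy is to invoke Miranda's description of the singularities of a Weierstra{\ss} model, Proposition~\ref{prop:Sing}, applied to the present fibration with $a = g_2^*$, $b = g_3^*$ and discriminant $\Delta = (g_2^*)^3 - 27(g_3^*)^2$; the divisors $A = G_2$, $B = G_3$ and $D$ are those read off from \eqref{Phi}--\eqref{Psi}. By parts~1 and~2 of that proposition, every point of $\mathrm{Sing}(W)$ has $Y = 0$ and $Z \neq 0$, so I normalise $Z = 1$ and treat separately the loci $X = 0$ (governed by part~3) and $X \neq 0$ (governed by part~4). All computations are carried out in the affine charts $U_i$, where $g_2^*$, $g_3^*$ and $\Delta$ are given by holomorphic functions.

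For the locus $X = Y = 0$ I would use the factorisations $\Phi = A_0^2(\cdots)$ and $\Psi = A_0^3(\cdots)$ in \eqref{Phi} and \eqref{Psi}: both $g_2^*$ and $g_3^*$ vanish on the line $L_\infty = \{A_0 = 0\}$, while $g_3^*$ vanishes there to order three. In a chart meeting $L_\infty$ (say $U_2$, with $(u,v) = (A_0/A_2, A_1/A_2)$) one finds $g_3^* = u^3\cdot(\text{unit})$, whose gradient vanishes identically along $u = 0$; hence $B = G_3$ is singular at every point of $L_\infty$, and part~3 of Proposition~\ref{prop:Sing} produces the whole family $\{((0:0:1);(0:A_1:A_2))\}$. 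To rule out further part~3 singularities, I would verify in $U_0$ that the finitely many singular points of the cubic $G_3$ away from $L_\infty$, namely the common zeros of $g_3,\partial_{a_1}g_3,\partial_{a_2}g_3$, do not lie on $G_2$; then $A\cap B$ carries no singular point of $B$ other than those on $L_\infty$.

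For the locus $X \neq 0$, $Y = 0$, part~4 of Proposition~\ref{prop:Sing} identifies $\mathrm{Sing}(W)$ with the singular points of $D$ lying off $G_2\cup G_3$, at which moreover $(X:0:Z) = (-3g_3^*:0:2g_2^*)$. Working in $U_0$ with $g_2,g_3$ as in \eqref{g_2 g_3} and $\Delta = g_2^3 - 27g_3^2$, I would write $\partial_{a_i}\Delta = 3g_2^2\,\partial_{a_i}g_2 - 54 g_3\,\partial_{a_i}g_3$. Since $g_2,g_3 \neq 0$ there, the system $\Delta = \partial_{a_1}\Delta = \partial_{a_2}\Delta = 0$ becomes equivalent to
\begin{align*}
  g_2^3 = 27g_3^2,\qquad 3g_3\,\partial_{a_1}g_2 = 2g_2\,\partial_{a_1}g_3,\qquad 3g_3\,\partial_{a_2}g_2 = 2g_2\,\partial_{a_2}g_3.
\end{align*}
Setting $\rho = g_3/g_2$, the first relation gives $g_2 = 27\rho^2$, the second lets me solve $a_1$ linearly as $a_1 = \alpha(-3\rho + a_2/6)$, and eliminating $a_1$ and then $\rho$ reduces the whole system to a single polynomial equation in $a_2$.

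I expect the elimination in this last step to be the real obstacle. A priori the resulting equation in $a_2$ has degree four, and the point is that its quartic and cubic coefficients cancel identically, so that it collapses to the quadratic $(4a_2 - \alpha^2)^2 = 64$. This yields exactly $a_2 = \tfrac{\alpha^2}{4} \pm 2$, whence $a_1 = \pm\alpha$ and $\rho = \tfrac{\alpha^2 \mp 16}{72}$, i.e. the two base points $(1:\pm\alpha:\tfrac{\alpha^2}{4}\pm 2)$. Substituting into $X = -3g_3^*/(2g_2^*) = -\tfrac32\rho$ gives $X = -\tfrac{\alpha^2}{48} \pm \tfrac13$, and a direct evaluation confirms $g_2,g_3 \neq 0$ at both points, so they indeed meet the hypotheses of part~4. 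Assembling the two loci produces the asserted formula for $\mathrm{Sing}(W)$.
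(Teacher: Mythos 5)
Your proposal is correct, but it takes a genuinely different route from the paper's own proof. The paper also starts from Proposition~\ref{prop:Sing} to reduce to $Y=0$, $Z\neq 0$, but from there it works directly with the affine defining polynomial: over $U_0$ it computes the critical points of $\varphi(x,y,a_1,a_2)=y^2-4x^3+g_2x+g_3$ as a hypersurface in $\CC^4$, solving the four resulting equations (a judicious combination of the squared $a_1$-equation with the $a_2$-equation, simplified via the $x$-equation, yields $a_1=\pm\alpha$ at once), and over $U_1$, $U_2$ it observes that divisibility of $(g_2^*)_i$ and $(g_3^*)_i$ by $u^2$ and $u^3$ forces singularity along the curve $(x,y)=(0,0)$, $u=0$; it never computes $\mathrm{Sing}(D)$ inside this proof. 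You instead push everything down to plane-curve geometry on the base via parts~3 and~4 of Proposition~\ref{prop:Sing}: the factorizations $\Phi=A_0^2(\cdots)$, $\Psi=A_0^3(\cdots)$ make $G_3$ singular along $A_0=0$ (note that only $u^3\mid g_3^*$ is needed here; your parenthetical ``unit'' actually fails at $(0:1:0)$, where the cofactor in the $U_1$ chart vanishes, but this is immaterial since the gradient of $u^3h$ vanishes along $u=0$ regardless), while the two isolated points arise as $\mathrm{Sing}(D)\setminus\left(G_2\cup G_3\right)$. I verified your key elimination claim: writing $s=4a_2-\alpha^2$, the two gradient-proportionality relations give $a_1=\alpha\left(-3\rho+\tfrac{a_2}{6}\right)$ and $\rho=\tfrac{a_2}{18}-\tfrac{8}{3s}$, and substituting into $g_2=27\rho^2$ makes the quartic and cubic terms cancel, leaving $3s^2=192$, i.e. $(4a_2-\alpha^2)^2=64$, exactly as you assert; the excluded case $s=0$ is impossible since it would force $0=-\tfrac{1}{3}$. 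As for what each approach buys: the paper's computation is elementary and self-contained, while yours front-loads the discriminant analysis, so the fact that $\left(1:\pm\alpha:\tfrac{\alpha^2}{4}\pm2\right)$ are singular points of $D$ — which the paper only extracts afterwards in Corollary~\ref{singularities of D}, by running part~4 of Proposition~\ref{prop:Sing} in the reverse direction — comes out along the way. The price is that your part-3 step requires knowing the singular points of $G_3$ on $U_0$; in fact $G_3$ is smooth on all of $U_0$ (the paper's Proposition~\ref{prop:intersection}(ii), whose proof is independent of the present statement), so your check succeeds vacuously and the implicit reordering of the paper's results is harmless.
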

\begin{proof}
  From Proposition \ref{prop:Sing}, it is enough to consider the open set where $Z\neq0$. 
  Over the Zariski open set $U_0 = \{A_0\neq0\}$ in $\CC\PP^2$, the equation \eqref{eq_WM} is written as \eqref{eq_Weierstrass} where $(x,y)=\left( X/Z,Y/Z \right)$ and $(a_1,a_2)=(A_1/A_0,A_2/A_1)$. 
  To find the singular points, we compute the derivative of the function
  \begin{align*}
    \varphi\left( x,y,a_1,a_2 \right) \coloneq y^2 - 4x^3+\left( 1 + \frac{a_2^2}{12}-\frac{\alpha}{4}a_1 \right) x + \left( \frac{a_2^3}{216}+\frac{a_1^2}{16}-\frac{\alpha}{48}a_1a_2-\frac{1}{6}a_2+\frac{\alpha^2}{16} \right).
  \end{align*}
  At a singular point of $W$, we have $\displaystyle \left(\frac{\del \varphi}{\del x}, \frac{\del \varphi}{\del y}, \frac{\del \varphi}{\del a_1}, \frac{\del \varphi}{\del a_2} \right) = \left( 0,0,0,0 \right)$, which is equivalent to
  \begin{numcases}{}
    1-\frac{\alpha}{4}a_1 = \frac{1}{12}\left( 12x-a_2 \right)\left( 12x+a_2 \right), \label{par-x}
    \\[4pt]
    y=0, \label{par-y}
    \\[4pt]
    a_1 = \frac{\alpha}{6}\left( 12x+a_2 \right), \label{par-a_1}
    \\[2pt]
    \frac{a_2}{12}\left( 12x+a_2 \right)=\frac{\alpha}{8}a_1+1 \label{par-a_2}.
  \end{numcases}
  By summing up $\text{\eqref{par-a_1}}^2$ and $\dfrac{2\alpha^2}{3}\times \text{\eqref{par-a_2}}$, we have $a_1=\pm\alpha$, using \eqref{par-x}.
  Henceforth, the double-sings correspond to each other.  
  Then, substituting \eqref{par-a_1} and \eqref{par-a_2} by $a_1=\pm\alpha$, we have $a_2=\dfrac{\alpha^2}{4}\pm 2$. 
  Recall that $\alpha$ is generically chosen and hence we assume $\alpha\neq 0$. 
  We further have $x=-\dfrac{\alpha^2}{48}\pm\dfrac{1}{3}$, using again \eqref{par-a_1}. 
  Thus, by \eqref{par-y}, the two points 

  \begin{align*}
    \left( \left( -\frac{\alpha^2}{48}\pm\frac{1}{3}:0:1 \right);\left( 1:\pm\alpha:\frac{\alpha^2}{4}\pm2 \right) \right),
  \end{align*}
  are singular points of $W$.

  Over $U_1=\{A_1\neq0\}$, the equation (\ref{eq_WM}) is written as in the following form:
  \begin{align*}
    y^2=4x^3-\left(g_2^{\ast}\right)_1(u,v)x-\left(g_3^{\ast}\right)_1(u,v),
  \end{align*}
  where $(x,y)=(X/Z,Y/Z)$ and $(u,v)=(A_0/A_1,A_2/A_1)$. 
  As $\left(g_2^{\ast}\right)_1$ and $\left(g_3^{\ast}\right)_1$ are respectively divisible by $u^2$ and $u^3$, we see that the hypersurface defined through this equation is singular along the curve $(x,y)=(0,0)$, $u=0$. 
  Thus, the set 
  \[
\Set{\left( \left( 0:0:1 \right);\left( 0:1:v \right) \right)\in W|v\in\CC} 
  \] is contained in ${\rm{Sing}}(W)$. Similarly, we can verify that the point $\left( \left( 0:0:1 \right);\left( 0:0:1 \right) \right)\in W$ is a singular point of $W$.
\end{proof}

By suitable blowing-ups of $W$ along the singular set $\mathrm{Sing}(W)$, we obtain a smooth model $\widehat{W}\to \CC\PP^2$ of the elliptic fibration. However, this elliptic fibration does not satisfy the condition (\ref{condition A}) and (\ref{condition B}). In fact, the divisors $G_2$ and $G_3$ on $\CC\PP^2$ defined respectively by $g_2^*=0$ and $g_3^*=0$ have worse singularities than nodes as observed in the following proposition. 

\begin{proposition}\label{prop:intersection}
  Let $\widetilde{G_2}$ and $\widetilde{G_3}$ be the divisors on $\CC\PP^2$ defined by the homogeneous polynomials $\widetilde{\Phi}=0$ and $\widetilde{\Psi}=0$, respectively, where 
  \begin{align*}
    \widetilde{\Phi}=A_0^2 +\frac{1}{12}A_2^2 - \frac{\alpha}{4}A_0A_1,\, \widetilde{\Psi}=\frac{1}{216}A_2^3 + \frac{1}{16}A_0A_1^2 -\frac{\alpha}{48}A_0A_1A_2 -\frac{1}{6}A_0^2A_2 +\frac{\alpha^2}{16}A_0^3.
  \end{align*}
  Then, the followings hold:
  \begin{enumerate}[{\rm(i)}]
    \item The divisors $\widetilde{G_2}$ and $\widetilde{G_3}$ intersect transversally at four points on $U_0=\{A_0\neq 0\}$. 
    Moreover, $\widetilde{G_2}$ and $\widetilde{G_3}$ are tangent at $(0:1:0)$. 
    \item The divisors $G_2$ and $G_3$ are not singular on $U_0\subset \CC\PP^2$. 
  \end{enumerate}
\end{proposition}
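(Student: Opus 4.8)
The plan is to reduce everything to the affine chart $U_0$ with coordinates $(a_1,a_2)=(A_1/A_0,A_2/A_0)$, in which $\widetilde{\Phi}$ and $\widetilde{\Psi}$ restrict precisely to the polynomials $g_2$ and $g_3$ of \eqref{g_2 g_3}, supplemented by a single local computation in the chart $U_1$ to handle the point at infinity $(0:1:0)$.

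First I would prove statement (ii). For $\widetilde{G_2}$ this is immediate: since $\partial g_2/\partial a_1=-\alpha/4\neq 0$ everywhere (recall $\alpha\neq 0$), the gradient of $g_2$ never vanishes, so $\widetilde{G_2}$ is smooth on $U_0$. For $\widetilde{G_3}$ I would set up the critical-point system $\partial g_3/\partial a_1=\partial g_3/\partial a_2=0$. The first equation is linear and gives $a_1=(\alpha/6)a_2$; substituting into the second yields a quadratic in $a_2$, so there are at most two candidate critical points. It then remains to check that $g_3$ does not vanish at these two points, which I would verify by reducing $g_3$ modulo the quadratic relation satisfied by $a_2$; for generic $\alpha$ the resulting expression is nonzero, so $\widetilde{G_3}$ has no singular point on $U_0$.

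For statement (i), on $U_0$ I would solve the linear equation $g_2=0$ for $a_1$ and substitute into $g_3=0$. Because $a_1$ enters $g_2$ linearly while $g_3$ is quadratic in $a_1$, the elimination produces a degree-four polynomial in $a_2$; for generic $\alpha$ its discriminant does not vanish, so it has four distinct roots and hence $\widetilde{G_2}\cap\widetilde{G_3}$ meets $U_0$ in four distinct points. To treat $(0:1:0)$ I would pass to the chart $U_1$ with coordinates $(u,v)=(A_0/A_1,A_2/A_1)$, in which the point becomes the origin and the two curves are $u^2+\tfrac{1}{12}v^2-\tfrac{\alpha}{4}u=0$ and $\tfrac{1}{16}u-\tfrac{\alpha}{48}uv+\tfrac{1}{216}v^3+\cdots=0$. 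Both have the same tangent line $u=0$, so the curves are tangent there; parameterising $\widetilde{G_2}$ by $u=v^2/(3\alpha)+O(v^4)$ and substituting shows the leading term of the second equation is $v^2/(48\alpha)$, so the intersection multiplicity at $(0:1:0)$ is exactly two.

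The observation that closes the argument is B\'ezout's theorem: the conic $\widetilde{G_2}$ and the cubic $\widetilde{G_3}$ meet in $2\cdot 3=6$ points counted with multiplicity. Since I will have exhibited four distinct points on $U_0$ together with a point of multiplicity two at infinity, the total is already $4+2=6$; this forces each of the four finite intersection points to have multiplicity one, i.e.\ to be transversal, so no separate Jacobian computation at those points is needed. The main obstacle I anticipate is purely computational, namely verifying that for generic $\alpha$ the quartic in $a_2$ has distinct roots and that $g_3$ is nonzero at the two critical points of $\widetilde{G_3}$; both amount to confirming that certain explicit discriminants and resultants in $\alpha$ do not vanish identically.
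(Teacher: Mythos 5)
Your proposal is correct, and its computational core coincides with the paper's proof: the same reduction to the affine polynomials $g_2,g_3$ on $U_0$, the same elimination of $a_1$ (using that $g_2$ is linear in $a_1$ for $\alpha\neq0$) producing a quartic in $a_2$ whose discriminant is generically nonzero, the same critical-point system $a_1=\tfrac{\alpha}{6}a_2$ plus a quadratic in $a_2$ for part (ii), and the same passage to the chart $U_1$ to analyse $(0:1:0)$. Where you genuinely differ is the final step of (i): the paper reads transversality directly off the non-vanishing discriminant of the elimination quartic (a simple root of the resultant is a transverse intersection, since $\widetilde{G_2}$ is a graph $a_1=\phi(a_2)$ over the $a_2$-line), whereas you use the discriminant only for distinctness of the four points and then deduce transversality from B\'ezout's theorem, $2\cdot 3=6=4+2$, combined with your explicit check that the intersection multiplicity at $(0:1:0)$ is exactly $2$. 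Both routes are valid; the paper's makes the computation at infinity a supplementary verification, while yours makes it load-bearing, and your parameterisation $u=v^2/(3\alpha)+O(v^4)$ with leading term $v^2/(48\alpha)$ does establish the required equality (the paper merely asserts the intersection number is two). Your B\'ezout count also tacitly needs that the conic and cubic share no common component, but this is immediate because the elimination quartic is not identically zero. One caution on (ii): ``the resulting expression is nonzero'' after reducing modulo the quadratic must mean nonzero \emph{at the two roots of the quadratic}, not merely nonzero as a polynomial in $a_2$; equivalently, the resultant of the quadratic and the linear remainder is a not-identically-vanishing polynomial in $\alpha$. This is exactly the same (equally terse) step the paper takes after its polynomial division, so your treatment matches the paper's level of rigour there.
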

\begin{proof}
(i)\; 
On $U_0$, the defining equations of $\widetilde{G_2}$ and $\widetilde{G_3}$ are written as
  \begin{align}
    &\widetilde{\Phi} = 1 + \frac{a_2^2}{12}-\frac{\alpha}{4}a_1=0, \label{g2=0}
    \\[4pt]
    &\widetilde{\Psi} = \frac{a_2^3}{216}+\frac{a_1^2}{16}-\frac{\alpha}{48}a_1a_2-\frac{1}{6}a_2+\frac{\alpha^2}{16}=0,\label{g3=0}
  \end{align}
  respectively, where $(a_1,a_2)=(A_1/A_0,A_2/A_0)$. 
  
  If $\alpha\neq0$, we have the equation
  \begin{align*}
    3a_2^4-\alpha^2a_2^3+72a_2^2-108\alpha^2a_2+27\left( \alpha^4+16 \right) = 0.
  \end{align*}
  Set $\displaystyle f(a_2)\coloneq3a_2^4-\alpha^2a_2^3+72a_2^2-108\alpha^2a_2+27\left( \alpha^4+16 \right)$. Then, the discriminant  
  \begin{align*}
    R\left( f,\frac{\dd f}{\dd a_2} \right)=-3^{10}\alpha^4\left( \alpha^2+16 \right)^3\left( \alpha+4 \right)^3\left( \alpha-4 \right)^3
  \end{align*}
  of $f$, where $R(\cdot,\cdot)$ denotes the resultant, does not vanish for almost all $\alpha$. 
  Therefore, $\widetilde{G_2}$ and $\widetilde{G_3}$ intersect transversally at four points on $U_0=\{A_0\neq0\}$ for a generic $\alpha$.
  \footnote{
  This is true even if $\alpha=0$. 
  In fact, if $\alpha=0$, the relations (\ref{g2=0}) and (\ref{g3=0}) can be written as $1+\frac{a_2^2}{12}=0$, $\frac{a_2^3}{216}+\frac{a_1^2}{16}-\frac{1}{6}a_2=0$, which still have four distinct roots.
  }

  On the other hand, on $U_1=\{A_1\neq 0\}$, $\widetilde{G_2}$ and $\widetilde{G_3}$ are defined through the equations 
  \begin{align}
    &u^2+\frac{v^2}{12}-\frac{\alpha}{4}u=0 ,
    \\[4pt]
    &\frac{v^3}{216}+\frac{u}{16}-\frac{\alpha}{48}uv-\frac{1}{6}u^2v+\frac{\alpha^2}{16}u^3=0,
  \end{align}
  respectively, where $(u,v)=(A_0/A_1,A_2/A_1)$. Then, we see that along $u=0$, $\widetilde{G_2}$ and $\widetilde{G_3}$ intersect only at the point $(u,v)=(0,0)$, i.e. $(A_0,A_1,A_2)=(0:1:0)$, where the intersection number is two. Note that neither $\widetilde{G_2}$ nor $\widetilde{G_3}$ contains the point $\left( 0:0:1 \right)\in\CC\PP^2$. Thus, we have proved (i). 

\noindent (ii)\; 
  On $U_0$, the divisor $G_2$ is defined through (\ref{g2=0}). 
  Since $\displaystyle \frac{\del \widetilde{\Phi}}{\del a_1}=-\frac{\alpha}{4}\neq 0$, if $\alpha\neq 0$, $G_2$ is smooth on $U_0$. 
  By (\ref{g3=0}), the singular points of $G_3$ on $U_0$ are characterized by the condition $\displaystyle \left(\widetilde{\Psi}, \frac{\del \widetilde{\Psi}}{\del a_1}, \frac{\del \widetilde{\Psi}}{\del a_2}\right)=\left(0,0,0\right)$, which yields 
  \begin{align}
    \left\{
    \begin{array}{l}
      \displaystyle 8a_2^3-3\alpha^2a_2^2-288a_2+108\alpha^2=0, \\[10pt]
      \displaystyle 4a_2^2-\alpha^2a_2-48=0.
    \end{array}
    \right. \label{eq_3}
    \end{align}
  By an elementary division of polynomials, we have 
  \begin{align*}
    8a_2^3-3\alpha^2a_2^2-288a_2+108\alpha^2=\left( 4a_2^2-\alpha^2a_2-48 \right)\left( 2a_2-\frac{\alpha^2}{4} \right)-\left( \frac{\alpha^4}{4}+192 \right)a_2+96\alpha^2.
  \end{align*}
  This implies that there exists no point satisfying the relations (\ref{eq_3}). Hence, $G_3$ is also smooth on $U_0$.
\end{proof}

\begin{remark}\label{rmk_normal_crossing}
  At a transverse intersection point $p\in\CC\PP^2$ of $G_2$ and $G_3$ in Proposition \ref{prop:intersection}, we can choose $(s_1,s_2)\coloneq \left( g_2^*,g_3^* \right)$ as a local coordinate system of $\CC\PP^2$ centred at $p$, and thus $W$ is locally defined as
\begin{align*}
  Y^2Z=4X^3-s_1XZ^2-s_2Z^3.
\end{align*}
In this case, the discriminant locus is defined through the equation
\begin{align*}
  \Delta=s_1^3-27s_2^2=0.
\end{align*}
As is well known, the discriminant locus has one cusp at $(s_1,s_2)=(0,0)$.
\end{remark}

From this remark and Proposition \ref{prop:Sing}, we obtain the following corollary.

\begin{corollary}\label{singularities of D}
  The singularities of the singular locus $D$ on $\CC\PP^2$ defined through $\Delta=(g_2^*)^3-27(g_3^*)^2=0$ are given as
  \begin{align*}
    {\rm{Sing}}(D) = \left\{\left( 1:\pm\alpha:\frac{\alpha^2}{4}\pm2 \right)\in \CC\PP^2\right\}&\cup\{\left( 0:0:1 \right)\in\CC\PP^2\}\cup\Set{ (0:1:v)\in\CC\PP^2 | v\in\CC }
    \\
    &\cup\Set{p\in \CC\PP^2 | \text{$G_2$ and $G_3$ intersect transversely at $p$}}.
  \end{align*}
\end{corollary}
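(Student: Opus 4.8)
The plan is to characterize $\mathrm{Sing}(D)$ as the locus where the discriminant section $\Delta=(g_2^*)^3-27(g_3^*)^2$ and its differential $\dd\Delta$ vanish simultaneously, and to organize the computation according to the splitting $\CC\PP^2=U_0\cup\{A_0=0\}$. First I would record the factorization coming from \eqref{Phi} and \eqref{Psi}: since $\Phi=A_0^2\,\widetilde{\Phi}$ and $\Psi=A_0^3\,\widetilde{\Psi}$, the degree-twelve numerator of the discriminant factors as $\Phi^3-27\Psi^2=A_0^6\bigl(\widetilde{\Phi}^3-27\widetilde{\Psi}^2\bigr)$. Hence the line $\{A_0=0\}$ is a non-reduced (multiplicity six) component of $D$; writing $\Delta$ in an affine chart as $u^6\,\overline{Q}(u,v)$ with $u=A_0/A_1$ shows that both $\Delta$ and $\dd\Delta$ vanish at every point of this line. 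This immediately contributes the whole line $\{A_0=0\}=\{(0:0:1)\}\cup\Set{(0:1:v) | v\in\CC}$ to $\mathrm{Sing}(D)$, accounting for the second and third sets in the statement.

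On the chart $U_0$ I would use the elementary observation that, because $\Delta=(g_2^*)^3-27(g_3^*)^2$, any point $p\in D\cap U_0$ lying on $G_2\cup G_3$ must in fact satisfy $g_2^*(p)=g_3^*(p)=0$, i.e. $p\in G_2\cap G_3$. This splits the analysis into two disjoint cases. For $p\in G_2\cap G_3$, Proposition \ref{prop:intersection}(i) produces exactly four such points on $U_0$, all transverse intersections, and Remark \ref{rmk_normal_crossing} shows that $D$ has a cusp — in particular a singular point — at each of them; this yields the fourth set. For $p\in D\cap U_0$ with $p\notin G_2\cup G_3$, I would invoke the equivalence in part $4$ of Proposition \ref{prop:Sing}: since $G_2$ and $G_3$ are precisely the divisors $A$ and $B$ there, $D$ is singular at such a $p$ exactly when $W$ is singular over $p$ at a point with $X\neq0$. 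As $\mathrm{Sing}(W)$ has already been completely determined in the preceding proposition, the only such base points are $\bigl(1:\pm\alpha:\tfrac{\alpha^2}{4}\pm2\bigr)$, giving the first set and simultaneously ruling out any further singularities in this stratum.

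The main obstacle I anticipate is completeness rather than exhibiting individual singular points: one must be certain that the union of the four displayed sets genuinely exhausts $\mathrm{Sing}(D)$ and that the three sources (the non-reduced line, the four cusps, the two isolated points) are accounted for without gap or overlap. This is exactly where the ``only if'' direction of Proposition \ref{prop:Sing} and the exhaustive count in Proposition \ref{prop:intersection}(i) do the real work, certifying that away from $\{A_0=0\}$ every singular point of $D$ is either one of the four transverse intersections of $G_2$ and $G_3$ or one of the two points arising from $\mathrm{Sing}(W)$. A final point to check is the tangency of $\widetilde{G_2}$ and $\widetilde{G_3}$ at $(0:1:0)$ noted in Proposition \ref{prop:intersection}(i); since this point lies on $\{A_0=0\}$, it is already absorbed into the non-reduced component and requires no separate treatment.
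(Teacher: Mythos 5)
Your proposal is correct and takes essentially the same route as the paper, which obtains the corollary from Remark \ref{rmk_normal_crossing} (cusps at the transverse intersections of $G_2$ and $G_3$), from Proposition \ref{prop:Sing}(4) combined with the computation of $\mathrm{Sing}(W)$ for the two nodes, and from the non-reducedness of $D$ along $A_0=0$ for the line. One harmless inaccuracy: $\widetilde{\Phi}^3-27\widetilde{\Psi}^2$ is itself divisible by $A_0$, so the line occurs with multiplicity $7$ in $D$ (cf.\ Theorem \ref{Discriminant locus}), not $6$; this does not affect your argument, since any multiplicity at least $2$ already places the whole line in $\mathrm{Sing}(D)$.
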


Further details on the divisor $D$ are described as in the following theorem.

\begin{theorem}\label{Discriminant locus}
  The discriminant locus $D$ consists of a line defined by $A_0=0$ with multiplicity $7$ and a singular quintic curve which has four cusps and two nodes as its singularities. Moreover, this quintic curve is tangent to the line $A_0=0$ at $(0:1:0)$ and $(0:0:1)$.
\end{theorem}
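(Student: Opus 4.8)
The plan is to compute the discriminant $\Delta=\Phi^{3}-27\Psi^{2}$ as an explicit homogeneous form of degree $12$ and to extract its factorization into a power of $A_{0}$ and a residual quintic. The key observation is the common factor of $A_{0}$ already present in \eqref{Phi} and \eqref{Psi}: writing $\Phi=A_{0}^{2}\widetilde{\Phi}$ and $\Psi=A_{0}^{3}\widetilde{\Psi}$ with $\widetilde{\Phi},\widetilde{\Psi}$ as in Proposition \ref{prop:intersection}, one gets at once
\[
  \Delta=A_{0}^{6}\!\left(\widetilde{\Phi}^{\,3}-27\widetilde{\Psi}^{\,2}\right).
\]
Restricting $\widetilde{\Phi}^{\,3}-27\widetilde{\Psi}^{\,2}$ to the line $A_{0}=0$ yields $\tfrac{1}{1728}A_{2}^{6}-27\cdot\tfrac{1}{216^{2}}A_{2}^{6}=0$, so $A_{0}$ divides it once more and $\Delta=A_{0}^{7}Q$ with $Q$ homogeneous of degree $5$. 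Thus the line $A_{0}=0$ is a component of $D$ of multiplicity at least $7$ and $\{Q=0\}$ is a plane quintic, the degree split $12=7+5$ being forced.

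Next I would compute the restriction $Q|_{A_{0}=0}$, which is the coefficient of $A_{0}^{1}$ in $\widetilde{\Phi}^{\,3}-27\widetilde{\Psi}^{\,2}$. A brief expansion gives
\[
  Q|_{A_{0}=0}=-\tfrac{1}{64}A_{1}^{2}A_{2}^{3}\not\equiv 0,
\]
so $A_{0}\nmid Q$ and the line enters $D$ with multiplicity exactly $7$. The same formula shows that $Q$ meets $A_{0}=0$ only at $(0:0:1)$ and $(0:1:0)$, with intersection multiplicities $2$ and $3$ (consistent with B\'ezout). To upgrade this to genuine tangency at smooth points, I would compute one further coefficient, namely the coefficient $R(A_{1},A_{2})$ of $A_{0}^{2}$ in $\widetilde{\Phi}^{\,3}-27\widetilde{\Psi}^{\,2}$, and check that $R(0,1)\neq0$ and $R(1,0)\neq0$. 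In the charts $A_{2}=1$ and $A_{1}=1$ these non-vanishings make the linear part of $Q$ a nonzero multiple of $A_{0}$, so $Q$ is smooth at $(0:0:1)$ and at $(0:1:0)$ with tangent line exactly $A_{0}=0$; this is the asserted tangency, and the tangency of $\widetilde{G_{2}},\widetilde{G_{3}}$ at $(0:1:0)$ recorded in Proposition \ref{prop:intersection} is a convenient cross-check.

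It remains to classify the singularities of the quintic. By Corollary \ref{singularities of D}, on $U_{0}=\{A_{0}\neq0\}$ the set $\mathrm{Sing}(D)$ consists exactly of the four transverse intersection points of $G_{2}$ and $G_{3}$ together with the two points $(1:\pm\alpha:\tfrac{\alpha^{2}}{4}\pm2)$, the remaining pieces of $\mathrm{Sing}(D)$ lying on $A_{0}=0$. Since on $U_{0}$ the reduced curve underlying $D$ is precisely $\{Q=0\}$, these six points are all of $\mathrm{Sing}(Q)\cap U_{0}$; together with the smoothness of $Q$ at the two points of $\{Q=0\}\cap\{A_{0}=0\}$ established above, they form the complete singular locus of $Q$. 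At each transverse intersection of $G_{2},G_{3}$, Remark \ref{rmk_normal_crossing} shows $(s_{1},s_{2})=(g_{2}^{*},g_{3}^{*})$ are local coordinates and $D$ is locally $s_{1}^{3}-27s_{2}^{2}=0$, a cusp; this gives the four cusps. At the two points $(1:\pm\alpha:\tfrac{\alpha^{2}}{4}\pm2)$, where $g_{2}^{*}$ and $g_{3}^{*}$ are both nonzero, I would compute the Hessian of the affine equation $g_{2}^{3}-27g_{3}^{2}$ (which equals $Q$ on $U_{0}$) and verify it is nondegenerate; by the Morse lemma such a critical point on $\{Q=0\}$ is an ordinary double point, i.e.\ a node.

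The principal obstacle is this final node verification: distinguishing a node from a cusp at $(1:\pm\alpha:\tfrac{\alpha^{2}}{4}\pm2)$ cannot be settled by genus or $\delta$-invariant counting, since both singularity types contribute $\delta=1$. One must genuinely exhibit the quadratic part of $g_{2}^{3}-27g_{3}^{2}$ at these points as a product of two distinct linear forms, using the locating relations \eqref{par-x}--\eqref{par-a_2} and the hypothesis that $\alpha$ is generic so that the two branches are distinct. As a final consistency check, a plane quintic has arithmetic genus $6$, and four cusps together with two nodes drop this by exactly $6$; hence the quintic is rational and, in particular, admits no further singularities.
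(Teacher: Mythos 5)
Your proposal follows essentially the same route as the paper: factor $\Delta=A_0^7Q$ with $Q$ quintic, read off the tangency from $Q|_{A_0=0}=-\tfrac{1}{64}A_1^2A_2^3$, get the four cusps from the local normal form $s_1^3-27s_2^2=0$ of Remark \ref{rmk_normal_crossing} at the transverse intersections of $G_2$ and $G_3$, and get the two nodes by showing the quadratic part of $\Delta$ at $(1:\pm\alpha:\tfrac{\alpha^2}{4}\pm2)$ is nondegenerate. Your smoothness check at the two tangency points (non-vanishing of the next coefficient $R$) is in fact slightly more careful than the paper's one-line conclusion, and it does succeed: the blow-up computations in Subsection 4.3.1 show the linear part of $Q$ is $-\tfrac{27}{256}\,u$ at $(0:1:0)$ and $\tfrac{1}{16}\,u$ at $(0:0:1)$.

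The one step you leave unexecuted is precisely the one you flag as the principal obstacle, and it is worth knowing how the paper dispatches it without brute force. Expanding $\Delta$ in local coordinates $(s_1,s_2)$ centred at $(\pm\alpha,\tfrac{\alpha^2}{4}\pm2)$, the linear part vanishes and the quadratic part takes the form $P_1(\alpha)s_1^2+P_3(\alpha)s_1s_2+P_2(\alpha)s_2^2$, where $P_1,P_2,P_3$ are polynomials in $\alpha$ of degrees $4$, $4$, $5$ respectively. Then $P_1P_2-\left(P_3/2\right)^2$ is a polynomial in $\alpha$ of degree $10$ (the two products have degrees $8$ and $10$, so no cancellation of leading terms is possible), hence not identically zero, hence nonzero for generic $\alpha$; nondegeneracy of the Hessian follows for generic $\alpha$ with no explicit evaluation needed. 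This degree-counting trick is exactly the missing ingredient in your sketch. Two minor cautions: your concluding genus check requires irreducibility of the quintic to bound $\delta\le 6$ (a reducible quintic can have $\delta>p_a$), so it is only the consistency check you call it; and completeness of the singular list on $U_0$ should be cited, as you do, from Corollary \ref{singularities of D} rather than from the genus count.
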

\begin{proof}
  With the homogeneous polynomials $\Phi$ and $\Psi$ as in (\ref{Phi}) and (\ref{Psi}), the discriminant locus $D$ is given as 
  \begin{align}
    0
    =
    \Phi^3-27\Psi^2
    =
    A_0^7\left( \frac{1}{48}A_2^4\,\breve{\Phi}+\frac{1}{4}A_0A_2^2\,\breve{\Phi}^2+A_0^2\,\breve{\Phi}^3-\frac{1}{4}A_2^3\,\breve{\Psi}-27A_0\,\breve{\Psi}^2 \right), \label{eq_dis}
  \end{align}
  where
  \begin{align*}
    \breve{\Phi}=A_0-\frac{\alpha}{4}A_1,~\breve{\Psi}=\frac{1}{16}A_1^2-\frac{\alpha}{48}A_1A_2-\frac{1}{6}A_0A_2+\frac{\alpha^2}{16}A_0^2. 
  \end{align*}
  Since the second factor on the right-hand side of \eqref{eq_dis} is a homogeneous polynomial of degree $5$, the discriminant locus $D$ consists of the line $A_0=0$ with multiplicity $7$ and the quintic curve
  \begin{align}
    \frac{1}{48}A_2^4\,\breve{\Phi}+\frac{1}{4}A_0A_2^2\,\breve{\Phi}^2+A_0^2\,\breve{\Phi}^3-\frac{1}{4}A_2^3\,\breve{\Psi}-27A_0\,\breve{\Psi}^2 =0 \label{quintic_eq}
  \end{align}
  which coincides with the divisor $D$ on $U_0=\{A_0\neq0\}$. From Proposition \ref{prop:intersection} and Corollary \ref{singularities of D}, the four transverse intersection points of $G_2$ and $G_3$ on $U_0$ are cusps of this quintic curve. 
  Moreover, the two points $\displaystyle \left( 1:\pm\alpha:\frac{\alpha^2}{4}\pm2 \right)\in \CC\PP^2$ also belong to the singularities of the quintic curve. 
  We further analyse the quintic curve around the two points $\displaystyle \left( 1:\pm\alpha:\frac{\alpha^2}{4}\pm2 \right)\in \CC\PP^2$. 
  Instead of the affine coordinates $(a_1,a_2)=(A_1/A_0,A_2/A_0)$ on $U_0$, we use the local coordinates $(s_1,s_2)\coloneq(a_1-p,a_2-q)$, where
  \begin{align*}
    (p,q)\coloneq\left( \pm\alpha,\frac{\alpha^2}{4}\pm2 \right).
  \end{align*}
  In a sufficiently small neighbourhood of $(s_1,s_2)=(0,0)$, the discriminant $\Delta$ is written as
  \begin{align*}
    \Delta=\varphi^3+3c_2\varphi^2+3c_2^2\varphi-27\psi^2-54c_3\psi,
  \end{align*}
  where $\varphi$, $\psi$, $c_2$, and $c_3$ are given by
  \begin{align*}
    &\varphi\coloneq \frac{1}{12}\left( s_2^2+2qs_2 \right)-\frac{\alpha}{4}s_1,
    \\[4pt]
    &\psi\coloneq \frac{1}{216}\left( s_2^3+3qs_2^2+3q^2s_2 \right)+\frac{1}{16}\left( s_1^2+2ps_1 \right)-\frac{\alpha}{48}\left( s_1s_2+qs_1+ps_2 \right)-\frac{1}{6}s_2,
    \\[4pt]
    &c_2\coloneq g_2(p,q)=1+\frac{q^2}{12}-\frac{\alpha}{4}p,~c_3\coloneq g_3(p,q)=\frac{q^3}{216}+\frac{p^2}{16}-\frac{\alpha}{48}pq-\frac{1}{6}q+\frac{\alpha^2}{16},\notag
  \end{align*}
  respectively. On the other hand, we can write $\Delta$ as
  \begin{align*}
    \Delta=\Delta_1(s_1,s_2)+\Delta_2(s_1,s_2)+\cdots,
  \end{align*}
  where $\Delta_k(s_1,s_2)$ denotes a homogeneous polynomial of degree $k$ in $s_1,s_2$. 
  Note that the point $(s_1,s_2)=(0,0)$ is the zero of $\varphi$ and $\psi$ both  of order one. 
  We can easily verify that
  \begin{align*}
  \Delta_1(s_1,s_2)=\frac{\del \Delta}{\del s_1}(0,0)\,s_1+\frac{\del \Delta}{\del s_2}(0,0)\,s_2=0.
  \end{align*}
  The term $\Delta_2(s_1,s_2)$ is given as
  \begin{align*}
    \Delta_2(s_1,s_2)=\frac{1}{2!}\left( \frac{\del^2 \Delta}{\del s_1^2}(0,0)\,s_1^2+\frac{\del^2 \Delta}{\del s_2^2}(0,0)\,s_2^2+\frac{\del^2 \Delta}{\del s_1s_2}(0,0)\,s_1s_2 \right).
  \end{align*}
  Considering $p$, $q$, $c_2$, and $c_3$ as polynomials in $\alpha$, we can write $\Delta_2(s_1,s_2)$ as
  \begin{align*}
    \Delta_2(s_1,s_2)=P_1(\alpha)\,s_1^2+P_2(\alpha)\,s_2^2+P_3(\alpha)\,s_1s_2,
  \end{align*}
  where $P_1(\alpha)$, $P_2(\alpha)$, and $P_3(\alpha)$ denote the polynomials in $\alpha$ with $\deg P_1(\alpha)=\deg P_2(\alpha)=4$ and $\deg P_3(\alpha)=5$. 
  Since $P_1(\alpha)P_2(\alpha)$ and $\left( P_3(\alpha)/2 \right)^2$ are the polynomials of degree $8$ and $10$, respectively, we have
  \begin{align*}
    \det
    \left(
      \begin{array}{cc}
        \displaystyle P_1(\alpha) & \displaystyle \frac{P_3(\alpha)}{2} \\
        & \\
        \displaystyle \frac{P_3(\alpha)}{2} & \displaystyle P_2(\alpha) \\
      \end{array}
    \right)
    =P_1(\alpha)P_2(\alpha)-\left( \frac{P_3(\alpha)}{2} \right)^2 \neq  0,
  \end{align*}
  for a generic $\alpha$. 
  This implies that the two singular points $\displaystyle\left( \pm\alpha,\frac{\alpha^2}{4}\pm2 \right)$ are nodes.

  Finally, if $A_0=0$, the left-hand side of (\ref{quintic_eq}) is
  \begin{align*}
    \frac{1}{48}A_2^4\left( -\frac{\alpha}{4}A_1 \right)-\frac{1}{4}A_2^3\left( \frac{1}{16}A_1^2-\frac{\alpha}{48}A_1A_2 \right)
    =-\frac{1}{64}A_1^2A_2^3.
  \end{align*}
  This implies the last statement.
\end{proof}

\begin{notation}
We denote the two components of $D$ which are the line and the singular quintic curve respectively by $L$ and $Q$.
\end{notation}

\subsection{Modification of the base space $\CC\PP^2$ and singular fibres of a smooth model of $W$ as a Miranda elliptic threefold}

As is pointed out in the previous subsection, the Weierstra{\ss} normal form $\pi_{W}\colon W\to\CC\PP^2$ given by (\ref{eq_WM}) does not satisfy the conditions (\ref{condition A}) and (\ref{condition B}). 
In this subsection, we construct a smooth model $\pi_{\widehat{\mathcal{W}}}\colon\widehat{\mathcal{W}}\to \widehat{\CC\PP^2}$ of the elliptic fibration satisfying the following conditions by suitable modifications of the base space $\widehat{\CC\PP^2}\to\CC\PP^2$ and the total space $\widehat{\mathcal{W}}\to W$ along the corresponding locus: 

\begin{itemize}
  \item The Weierstra{\ss} normal form $\pi_{\widehat{\mathcal{W}}}\colon\widehat{\mathcal{W}}\to \widehat{\CC\PP^2}$ is birationally equivalent to $\pi_{W}\colon W\to\CC\PP^2$.
  \item The elliptic fibration $\pi_{\widehat{\mathcal{W}}}\colon\widehat{\mathcal{W}}\to\widehat{\CC\PP^2}$ satisfies the condition (\ref{condition A})--(C) in Subsection \ref{Miranda's elliptic threefolds}.
  \item All the colliding types appearing in $\widehat{\mathcal{W}}$ are on Miranda's list.
\end{itemize} 
Then, we study the singular fibres of the elliptic fibration $\pi_{\widehat{\mathcal{W}}}\colon \widehat{\mathcal{W}}\to\widehat{\CC\PP^2}$.

\subsubsection{Modification of the base plane $\CC\PP^2$}\label{modification of the base space}

Here, we change the base space $\CC\PP^2$ through birational mappings to make the reduced total transforms of the reduced divisors $A_0$, $B_0$, $D_0$ satisfy the conditions (\ref{condition A})--(C) in Subsection \ref{Miranda's elliptic threefolds}. 
In what follows, we construct the birational changes according to the three different types of singular points of $D_0$.

\begin{enumerate}[(a)]
  \item {Blowing-ups at the four transverse intersections of $G_2$ and $G_3$:}\\
  We consider a transverse intersection point $p\in \mathrm{supp}(D)$ of $G_2$ and $G_3$ on $U_0=\{A_0\neq0\}$. 
  As in Remark \ref{rmk_normal_crossing}, regarding $(s_1,s_2)\coloneq (g_2^*,g_3^*)$ as a local coordinate system of $\CC\PP^2$ with the centre at $p$, the discriminant locus $D$ is given as $\displaystyle\Delta=s_1^3-27s_2^2=0$. 
  On $U_0$, the discriminant locus $D$ coincides with the singular quintic curve $Q$ and the point $p$ is a cusp point. 
  As is well known, by blowing up three times, all the singular points of the reduced total transform of $Q$ turn to be nodes. 
  Moreover, the local equations of the total and the proper transforms of $Q$ denoted by $\widehat{\Delta}$ and $\widetilde{\Delta}$ respectively are written in different open sets as
  \begin{align}
    \begin{cases}
      \widehat{\Delta}={s_1'}^2{s_2'}^6\left( s_1'-27 \right),
      \\[4pt]
      \widehat{\Delta}=\bar{s}_1^6\,\bar{s}_2^3\left( 1-27\bar{s}_2 \right),
    \end{cases}
    \begin{cases}
      \widetilde{\Delta}=s_1'-27,
      \\[4pt]
      \widetilde{\Delta}= 1-27\bar{s}_2,
    \end{cases}
    \label{cusp:total transform of D }
  \end{align}
  where $(s_1',s_2')$ and $(\bar{s}_1,\bar{s}_2)$ are the blowing-up coordinates satisfying the relations $s_1's_2'=\bar{s}_1$ and $\bar{s}_1\bar{s}_2=s_2'$. 
  We denote the smooth curves $s_1'=0$, $\bar{s}_2=0$, and $\widetilde{\Delta}=0$ respectively by $E_1$, $E_2$, and $\widetilde{Q}$. 
  The exceptional curve $E_3$ with respect to the third blowing-up is defined by $s_2'=0$ or $\bar{s}_1=0$. 
  Note that, if we denote the whole birational change up to here by $\sigma_{p}$, we have $\displaystyle\sigma_{p}^{-1}(p)=E_1\cup E_2\cup E_3$.
  The reduced total transform of $Q$ has three nodes as follows:
  \begin{itemize}
    \item The intersection point $s_1'=s_2'=0$ of $E_1$ and $E_3$, denoted by $r_1$ in Figure \ref{blowing up the cusp}.
    \item The intersection point $\bar{s}_1=\bar{s}_2=0$ of $E_2$ and $E_3$, denoted by $r_2$ in Figure \ref{blowing up the cusp}.
    \item The intersection point $s_1'=27,\,s_2'=0$ or equivalently $\bar{s}_1=0,\,\bar{s}_2=1/27$ of $\widetilde{Q}$ and $E_3$, denoted by $r_3$ in Figure \ref{blowing up the cusp}.
  \end{itemize}
  
  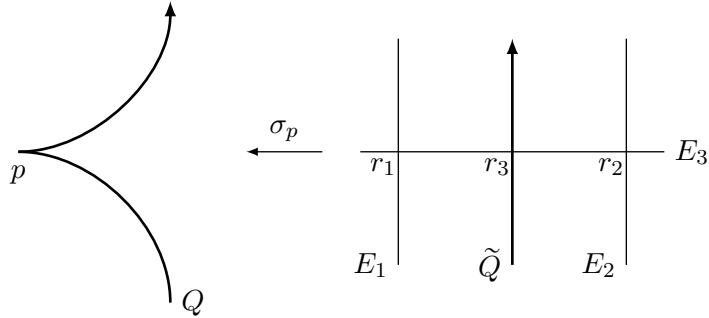
\begin{figure}[H]
    \centering
    \begin{tikzpicture}[>=latex]
      \fill[black] 
      (0,-0.3) circle(0) node {$p$} 
      (4.8,-0.2) circle(0) node {$r_1$} 
      (6.3,-0.2) circle(0) node {$r_3$} 
      (7.8,-0.2) circle(0) node {$r_2$};
      \draw[line width=1pt,->] (0,0) .. controls (1,0) and (2,1) .. (2,2); 
      \draw[line width=1pt] (0,0) ..controls (1,0) and (2,-1) .. (2,-2) node[right] {$Q$};
      \draw[line width=0.5pt,<-] (3,0) -- (4,0) node[midway,above] {$\sigma_{p}$};
      \draw[line width=0.5pt] (4.5,0) -- (8.5,0) node[right] {$E_3$};
      \draw[line width=0.5pt] (5,1.5) -- (5,-1.5) node[left] {$E_1$};
      \draw[line width=0.5pt] (8,1.5) -- (8,-1.5) node[left] {$E_2$};
      \draw[line width=1pt,<-] (6.5,1.5) -- (6.5,-1.5) node[left] {$\widetilde{Q}$};
  \end{tikzpicture}
  \caption{Blowing up at the cusp}
  \label{blowing up the cusp}
  \end{figure}

  \item {\bf Blowing-ups at $(0:1:0)$:}\\
  By using the inhomogeneous coordinates $(u,v)=(A_0/A_1,A_2/A_1)$ on $U_1=\{A_1\neq0\}$, the discriminant locus $D$ is given by
  \begin{align*}
    \Delta=u^7\left( \frac{v^4}{48}\varphi^{\prime}+\frac{v^2}{4}u\left(\varphi^{\prime}\right)^2+u^2\left(\varphi^{\prime}\right)^3-\frac{v^3}{4}\psi^{\prime}-27u\left(\psi^{\prime}\right)^2 \right)=0,
  \end{align*}
  where 
  \begin{align*}
    \varphi^{\prime}= u-\frac{\alpha}{4},~\displaystyle\psi^{\prime}=\frac{1}{16}-\frac{\alpha}{48}v-\frac{1}{6}uv+\frac{\alpha^2}{16}u^2.
  \end{align*}

  Since the quintic curve $Q$ defined by $\displaystyle \frac{v^4}{48}\varphi^{\prime}+\frac{v^2}{4}u\left(\varphi^{\prime}\right)^2+u^2\left(\varphi^{\prime}\right)^3-\frac{v^3}{4}\psi^{\prime}-27u\left(\psi^{\prime}\right)^2=0$ on $U_1$ is tangent to the line $L$ defined by $u=0$ at the origin and $Q$ and $L$ do not intersect on $U_1$ except for the origin, we only have to blow up at the origin. 
  Through the two successive blowing-ups at the origin, the local equation of the total transform of $D$, which we denote by $\Delta'$, is written as
  \begin{align*}
    \left\{
      \begin{array}{l}
      \displaystyle\Delta'= {u'}^{16}{v'}^8\left( \frac{{u'}^2{v'}^3}{48}\varphi_1'+\frac{\left( u'v' \right)^2}{4}\left( \varphi_1' \right)^2+{u'}^2v'\left( \varphi_1' \right)^3-\frac{u'{v'}^2}{4}\psi_1'-27\left( \psi_1' \right)^2 \right),
      \\[15pt]
      \displaystyle\Delta'= \bar{u}^7\bar{v}^{16}\left( \frac{\bar{v}^2}{48}\bar{\varphi}_1+\frac{\bar{u}\bar{v}^2}{4}\left( \bar{\varphi}_1 \right)^2+\left( \bar{u}\bar{v} \right)^2\left( \bar{\varphi}_1 \right)^3-\frac{\bar{v}}{4}\bar{\psi}_1-27\bar{u}\left( \bar{\psi}_1 \right)^2 \right),
      \end{array}
    \right.
  \end{align*}
  where $(u',v')$ and $(\bar{u},\bar{v})$ are the blowing-up coordinates and $\varphi_1'$, $\psi_1'$, $\bar{\varphi}_1$, and $\bar{\psi}_1$ are given as 
  \begin{align*}
    \varphi_1'={u'}^2v'-\frac{\alpha}{4},~\psi_1'=\frac{1}{16}-\frac{\alpha}{48}u'v'-\frac{1}{6}{u'}^3{v'}^2+\frac{\alpha^2}{16}{u'}^4{v'}^2,
  \end{align*}
  or 
  \begin{align*}
    \bar{\varphi}_1=\bar{u}\bar{v}^2-\frac{\alpha}{4},~\bar{\psi}_1=\frac{1}{16}-\frac{\alpha}{48}\bar{v}-\frac{1}{6}\bar{u}\bar{v}^3+\frac{\alpha^2}{16}\bar{u}^2\bar{v}^4.
  \end{align*}
  We denote the divisors $v'=0$, $u'=0$ (or $\bar{v}=0$), and $\bar{u}=0$ respectively by $E_1$, $E_2$, and $L'$.
  Note that $E_2$ is the exceptional divisor with respect to the second blowing-up. 
  We further consider the divisor $Q'$ defined through
  \begin{align}
    \frac{{u'}^2{v'}^3}{48}\varphi_1'+\frac{\left( u'v' \right)^2}{4}\left( \varphi_1' \right)^2+{u'}^2v'\left( \varphi_1' \right)^3-\frac{u'{v'}^2}{4}\psi_1'-27\left( \psi_1' \right)^2=0, \label{eq_5}
  \end{align}
  or
  \begin{align}
    \frac{\bar{v}^2}{48}\bar{\varphi}_1+\frac{\bar{u}\bar{v}^2}{4}\left( \bar{\varphi}_1 \right)^2+\left( \bar{u}\bar{v} \right)^2\left( \bar{\varphi}_1 \right)^3-\frac{\bar{v}}{4}\bar{\psi}_1-27\bar{u}\left( \bar{\psi}_1 \right)^2=0. \label{eq_6}
  \end{align}
  By plugging in $u'=0$ for the left-hand side of the equation (\ref{eq_5}), we have
  \begin{align*}
    -27\psi_1'^2=-26\left( \frac{1}{16} \right)^2\neq0,
  \end{align*}
  that is, $\displaystyle\bar{\Delta'}(u',v')\coloneq\frac{{u'}^2{v'}^3}{48}\varphi_1'+\frac{\left( u'v' \right)^2}{4}\left( \varphi_1' \right)^2+{u'}^2v'\left( \varphi_1' \right)^3-\frac{u'{v'}^2}{4}\psi_1'-27\left( \psi_1' \right)^2$ is a unit at $(u',v')=(0,0)$, where $E_1$ and $E_2$ intersect.
  On the other hand, if $\bar{v}=0$, the left-hand side of the equation (\ref{eq_6}) is 
  \begin{align*}
    -27\bar{u}\bar{\psi}_1^2=-26\left( \frac{1}{16} \right)\bar{u}.
  \end{align*}
  Since the multiplicity of (\ref{eq_6}) is equal to one, the singularities of the reduced total transform of $D$ consist of
  \begin{itemize}
    \item The triple point $\bar{u}=\bar{v}=0$, denoted by $r'_1$ in Figure \ref{blowing up at infinity}.
    \item The intersection point $u'=v'=0$ of $E_1$ and $E_2$, denoted by $r'_2$ in Figure \ref{blowing up at infinity}.
  \end{itemize}
  \begin{figure}[H]
    \centering
    \begin{tikzpicture}[>=latex]
      \fill[black] 
      (-1.7,-0.3) circle(0) node {$r'_1$} 
      (1.8,-0.3) circle(0) node {$r'_2$};
      \draw[line width=0.5pt] (-3,0) -- (3,0) node[right] {$E_2$};
      \draw[line width=0.5pt] (2,1.5) -- (2,-1.5) node[right] {$E_1$};
      \draw[line width=1pt,<-] (-1.5,1.5) -- (-1.5,-1.5) node[right] {$L'$};
      \draw[line width=1pt,<-] (-2.5,1.3) .. controls (-2.2,1) .. (-1.5,0);
      \draw[line width=1pt] (-1.5,0) .. controls (-0.8,-1) .. (-0.5,-1.3) node[right] {$Q'$};
    \end{tikzpicture}
    \caption{Two blowing-ups at $(0:1:0)$}
    \label{blowing up two times at infinity}
    \label{blowing up at infinity}
  \end{figure}
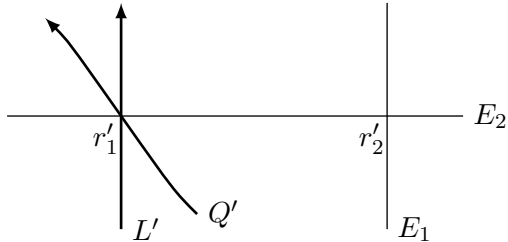
  Furthermore, we blow up the base space at $r'_1$ so that the singularities of the reduced total transform of $D$ are only nodes and at $r'_2$ so that all the colliding types are on Miranda's list. Then, the reduced total transform of $D$ has only nodes as singularities. The local equation $\widehat{\Delta}$ of the total transform of $D$ is written as
  \begin{align*}
    {\text{around}\; r_1^{\prime}:}
    &\left\{
    \begin{array}{l}
      \displaystyle\widehat{\Delta}= \bar{u}^{24}s^{16}\left( \frac{s^2}{48}\bar{u}\varphi_2'+\frac{\left( \bar{u}s \right)^2}{4}{\left( \varphi_2' \right)}^2+\bar{u}^3s^2{\left( \varphi_2' \right)}^3-\frac{s}{4}\psi_2'-27{\left( \psi_2' \right)}^2 \right),
      \\[15pt]
      \displaystyle\widehat{\Delta}= t^7\bar{v}^{24}\left( \frac{\bar{v}}{48}\bar{\varphi}_2+\frac{\bar{v}^2}{4}t{\left( \bar{\varphi}_2 \right)}^2+t^2\bar{v}^3{\left( \bar{\varphi}_2 \right)}^3-\frac{1}{4}\bar{\psi}_2-27t{\left( \bar{\psi}_2 \right)}^2 \right),
      \end{array}
    \right.
    \\[3pt]
    {\text{around}\; r_2^{\prime}:}
    &\left\{
      \begin{array}{l}
      \displaystyle\widehat{\Delta}= {u'}^{24}{s'}^8\,\bar{\Delta'}(u',s'u'),
      \\[4pt]
      \displaystyle\widehat{\Delta}= {t'}^{16}{v'}^{24}\,\bar{\Delta'}(t'v',v'),
      \end{array}
    \right.
  \end{align*}
  where $(u',s')$, $(t',v')$, $(\bar{u},s)$, and $(t,\bar{v})$ are the blowing-up coordinates and $\varphi_2'$, $\psi_2'$, $\bar{\varphi}_2$, and $\bar{\psi}_2$ are given as
  \begin{align*}
    \varphi_2'=\bar{u}^3s^2-\frac{\alpha}{4},~\psi_2'=\frac{1}{16}-\frac{\alpha}{48}\bar{u}s-\frac{1}{6}\bar{u}^4s^3+\frac{\alpha^2}{16}\bar{u}^6s^4,
  \end{align*}
  or
  \begin{align*}
    \bar{\varphi}_2=t\bar{v}^3-\frac{\alpha}{4},~\bar{\psi}_2=\frac{1}{16}-\frac{\alpha}{48}\bar{v}-\frac{1}{6}t\bar{v}^4+\frac{\alpha^2}{16}t^2\bar{v}^6.
  \end{align*}
  The exceptional divisors $E_3$ and $E_4$ are defined through the equations $\bar{u}=0$ (or $\bar{v}=0$) and $u'=0$ (or $v'=0$), respectively. The proper transforms of $E_1$, $E_2$, and $L'$ are defined through the equations $s'=0$, $s=0$ (or $t'=0$), and $t=0$, respectively. Moreover, the proper transform of $Q'$ is given by the equation
  \begin{align*}
    \frac{s^2}{48}\bar{u}\varphi_2'+\frac{\left( \bar{u}s \right)^2}{4}{\left( \varphi_2' \right)}^2+\bar{u}^3s^2{\left( \varphi_2' \right)}^3-\frac{s}{4}\psi_2'-27{\left( \psi_2' \right)}^2=0,
  \end{align*}
  or
  \begin{align*}
    \frac{\bar{v}}{48}\bar{\varphi}_2+\frac{\bar{v}^2}{4}t\left( \bar{\varphi}_2 \right)^2+t^2\bar{v}^3\left( \bar{\varphi}_2 \right)^3-\frac{1}{4}\bar{\psi}_2-27t\left( \bar{\psi}_2 \right)^2=0.
  \end{align*}
  Now, if we denote the proper transforms of $E_1$, $E_2$, $L'$, and $Q'$ by $\widetilde{E}_1$, $\widetilde{E}_2$, $\widetilde{L}$, and $\widetilde{Q}$, respectively, the configuration of these divisors at infinity is as in Figure \ref{ultimate blowing up at (0:1:0)}. 
  \begin{figure}[H]
    \centering
    \begin{tikzpicture}[>=latex]
      \draw[line width=1pt,<-] (-12,-1) -- (-8,-1) node[right] {$L$};
      \draw[line width=1pt,<-] (-12,-2) to [out=45,in=180] (-10,-1);
      \draw[line width=1pt] (-8,0) node[right] {$Q$} to [out=225,in=360] (-10,-1);
      \draw[line width=0.5,<-] (-6.5,-1) -- (-4.5,-1) node[midway,above] {blowing-ups};
      \draw[line width=0.5pt] (-3,0) node[left] {$\widetilde{E_2}$} -- (3,0) ;
      \draw[dashed] (2,1.75)  -- (2,-0.5) node[right] {$E_4$};
      \draw[line width=0.5pt] (1,1.5) -- (3,1.5) node[right] {$\widetilde{E}_1$};
      \draw[dashed] (-1.5,1) node[right] {$E_3$} -- (-1.5,-3.5);
      \draw[line width=1pt,<-] (-3,-1.5) .. controls (-2.3,-1.4) .. (-1.5,-1.5);
      \draw[line width=1pt] (0,-1.5) node[right] {$\widetilde{Q}$} .. controls (-0.7,-1.6) .. (-1.5,-1.5);
      \draw[line width=1pt,<-] (-3,-3) -- (0,-3) node[right] {$\widetilde{L}$};
    \end{tikzpicture}
    \caption{Divisors after the blowing-ups at $(0:1:0)$}
    \label{ultimate blowing up at (0:1:0)}
  \end{figure}
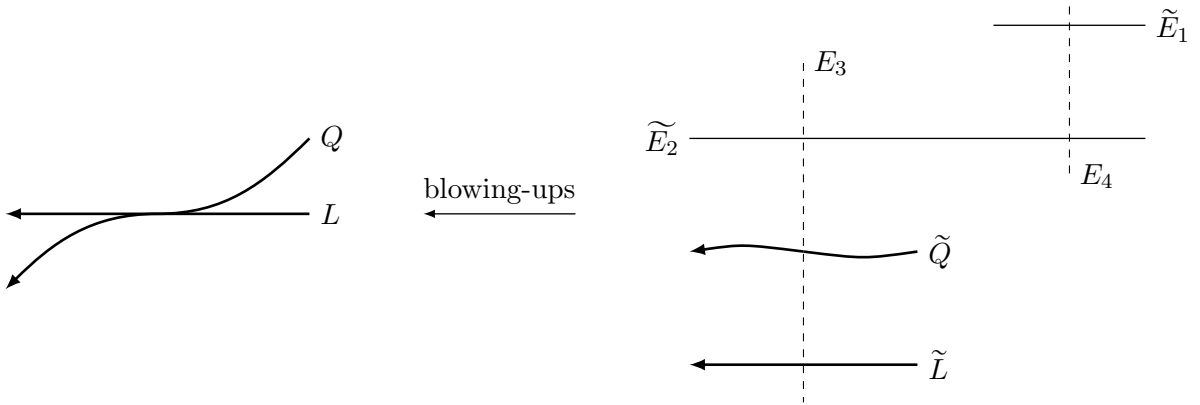
  
  \item {\bf Blowing-ups at $(0:0:1)$:}\\
  By using the inhomogeneous coordinates $(u,v)\coloneq(A_0/A_2,A_1/A_2)$ on $U_2=\{A_2\neq0\}$, the local equation of the discriminant locus $D$ is given by
  \begin{align*}
    \Delta=u^7\left( \frac{1}{48}\varphi+\frac{1}{4}u\varphi^2+u^2\varphi^3-\frac{1}{4}\psi-27u\psi^2 \right),
  \end{align*}
  where
  \begin{align*}
    \varphi=u-\frac{\alpha}{4}v,~\psi=\frac{1}{16}v^2-\frac{\alpha}{48}v-\frac{1}{6}u+\frac{\alpha^2}{16}u^2.
  \end{align*}
  The quintic curve $Q$ defined by $\displaystyle\frac{1}{48}\varphi+\frac{1}{4}u\varphi^2+u^2\varphi^3-\frac{1}{4}\psi-27u\psi^2=0$ is tangent to the line $L$ defined by $u=0$ at $(u,v)=(0,0)$, i.e. at $(0:0:1)\in\CC\PP^2$, with the intersection number $2$ at this point. Therefore, we need to blow up at $(u,v)=(0,0)$ and repeat this procedure so that the singularities of the reduced discriminant locus consist only of nodes. We can easily verify that this condition is satisfied after the two blowing-ups. Then, the local equation of the total transform of $D$ is written as
  \begin{align*}
    \left\{
      \begin{array}{l}
      \displaystyle\widehat{\Delta}= {u'}^{16}{v'}^8\left( \frac{1}{48}+\frac{1}{4}{u'}^2{v'}^2{\left( \varphi' \right)}^2+{u'}^6{v'}^4{\left( \varphi' \right)}^3-\frac{1}{4}\left( \frac{1}{16}v'-\frac{1}{6}+\frac{\alpha^2}{16}{u'}^2v' \right)-27{u'}^2{v'}^2\psi' \right),
      \\[15pt]
    \displaystyle\widehat{\Delta}= \bar{u}^7\bar{v}^{16}\left( \frac{1}{48}\bar{u}+\frac{1}{4}\bar{u}\bar{v}^2\left( \bar{\varphi} \right)^2+\bar{u}^2\bar{v}^5\left( \bar{\varphi} \right)^3-\frac{1}{4}\left( \frac{1}{16}-\frac{1}{6}\bar{u}+\frac{\alpha^2}{16}\bar{u}^2\bar{v}^2 \right) -27\bar{u}\bar{v}^2\bar{\psi} \right),
      \end{array}
    \right.
  \end{align*}
  where $(u',v')$ and $(\bar{u},\bar{v})$ are the blowing-up coordinates and $\varphi'$, $\psi'$, $\bar{\varphi}$, and $\bar{\psi}$ are given as
  \begin{align*}
    \varphi'=u'-\frac{\alpha}{4},~\psi'=\frac{1}{16}u'v'-\frac{\alpha}{48}-\frac{1}{6}u'+\frac{\alpha^2}{16}{u'}^3v',
  \end{align*}
  or
  \begin{align*}
    \bar{\varphi}=\bar{u}\bar{v}-\frac{\alpha}{4},~\bar{\psi}=\frac{1}{16}\bar{v}-\frac{\alpha}{48}-\frac{1}{6}\bar{u}\bar{v}+\frac{\alpha^2}{16}\bar{u}^2\bar{v}^3.
  \end{align*}
  By $E_1$, $E_2$, and $\widetilde{L}$, we denote the divisors defined through $v'=0$, $u'=0$ (or $\bar{v}=0$), and $\bar{u}=0$, respectively. 
  Moreover, we consider the divisor $\widetilde{Q}$ with the local equation given by 
  \begin{align*}
      \displaystyle\widetilde{\Delta}=  \frac{1}{48}+\frac{1}{4}{u'}^2{v'}^2{\left( \varphi' \right)}^2+{u'}^6{v'}^4{\left( \varphi' \right)}^3-\frac{1}{4}\left( \frac{1}{16}v'-\frac{1}{6}+\frac{\alpha^2}{16}{u'}^2v' \right)-27{u'}^2{v'}^2\psi'=0,
  \end{align*}
  or
  \begin{align*}
    \displaystyle\widetilde{\Delta}=  \frac{1}{48}\bar{u}+\frac{1}{4}\bar{u}\bar{v}^2\left( \bar{\varphi} \right)^2+\bar{u}^2\bar{v}^5\left( \bar{\varphi} \right)^3-\frac{1}{4}\left( \frac{1}{16}-\frac{1}{6}\bar{u}+\frac{\alpha^2}{16}\bar{u}^2\bar{v}^2 \right) -27\bar{u}\bar{v}^2\bar{\psi}=0.
  \end{align*}
  The singularities of the reduced total transform of $D$ can be described as follows:
  \begin{itemize}
    \item The intersection point $u'=v'=0$ of $E_1$ and $E_2$, denoted by $q_1$ in Figure \ref{blowing up at (0:0:1)}.
    \item The intersection point $(u',v')=(0,4)$ or equivalently $(\bar{u},\bar{v})=(1/4,0)$ of $E_2$ and $\widetilde{Q}$, denoted by $q_2$ in Figure \ref{blowing up at (0:0:1)}.
    \item The intersection point $\bar{u}=\bar{v}=0$ of $E_2$ and $\widetilde{L}$, denoted by $q_3$ in Figure \ref{blowing up at (0:0:1)}.
  \end{itemize}

  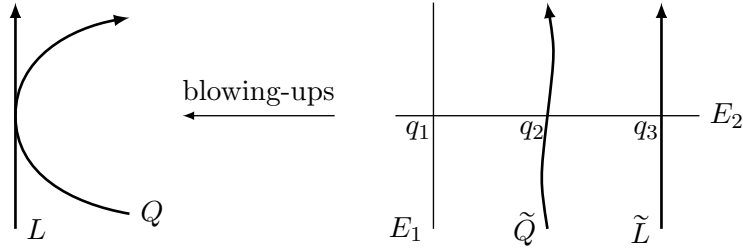
\begin{figure}[H]
    \centering
    \begin{tikzpicture}[>=latex]
      \fill[black] 
      (4.8,-0.2) circle(0) node {$q_1$} 
      (6.3,-0.2) circle(0) node {$q_2$} 
      (7.8,-0.2) circle(0) node {$q_3$};
      \draw[line width=1pt,<-] (-.5,1.5) -- (-.5,-1.5) node[right] {$L$};
      \draw[line width=1pt,<-] (1,1.3) to [out=190,in=90] (-.5,0);
      \draw[line width=1pt] (1,-1.3) node[right] {$Q$} to [out=170,in=270] (-.5,0);
      \draw[line width=0.5pt,<-] (1.7,0) -- (3.7,0) node[midway,above] {blowing-ups}; 
      \draw[line width=0.5pt] (4.5,0) -- (8.5,0) node[right] {$E_2$};
      \draw[line width=0.5pt] (5,1.5) -- (5,-1.5) node[left] {$E_1$};
      \draw[line width=1pt,<-] (8,1.5) -- (8,-1.5) node[left] {$\widetilde{L}$};
      \draw[line width=1pt,<-] (6.5,1.5) .. controls (6.6,0.85) .. (6.5,0);
      \draw[line width=1pt] (6.5,0) .. controls (6.4,-0.85) .. (6.5,-1.5) node[left] {$\widetilde{Q}$};
  \end{tikzpicture}
  \caption{Blowing up at $(0:0:1)$}
  \label{blowing up at (0:0:1)}
  \end{figure}
  
\end{enumerate}

Through the above procedures, we obtain the desired base change $\widehat{\CC\PP^2}\to\CC\PP^2$.

\subsubsection{Construction of the smooth model $\widehat{\mathcal{W}}$ and its singular fibres}

Via the above base change $\widehat{\CC\PP^2}\to\CC\PP^2$, taking the pull-back of the original elliptic fibration $\pi_{W}\colon W\to\CC\PP^2$, we obtain an elliptic fibration in Weierstra{\ss} normal form $\pi_{\mathcal{W}}\colon\mathcal{W}\to\widehat{\CC\PP^2}$, which is birationally equivalent to the original elliptic fibration $\pi_{W}\colon W\to\CC\PP^2$. At this stage, the smooth model $\pi_{\widehat{\mathcal{W}}}\colon\widehat{\mathcal{W}}\to \widehat{\CC\PP^2}$ can easily be obtained from $\mathcal{W}$ through blowing-ups of the total space. In each of three cases discussed in Sub-subsection 4.3.1, the singular fibres of $\pi_{\widehat{\mathcal{W}}}$ are determined as follows:

\begin{enumerate}[(a)]
  \item The defining equation of $W$ on a neighbourhood of each of the four transverse intersection points of $G_2$ and $G_3$ is written in the form $\displaystyle Y^2Z=4X^3-s_1XZ^2-s_2Z^3$, and thus $G_2$ and $G_3$ are defined through the equations $s_1=0$ and $s_2=0$, respectively. Then, the total transforms of $G_2$ and $G_3$ are defined by 
  \begin{align}
    \widehat{G_2}:
    \begin{cases}
      s_1'{s_2'}^2=0,
      \\[4pt]
      \bar{s}_1^2\bar{s}_2=0,
    \end{cases}
  \end{align}
  and
  \begin{align}
    \widehat{G_3}:
    \begin{cases}
      s_1'{s_2'}^3=0,
      \\[4pt]
      \bar{s}_1^3\bar{s}_2^2=0,
    \end{cases}
  \end{align}
  respectively, where $(s_1',s_2')$ and $(\bar{s}_1,\bar{s}_2)$ are the blowing-up coordinates. Furthermore, the following equations are the pull-back of the defining equation of $W$:
  \begin{align}
    \begin{cases}
      Y^2Z=4X^3-s_1'{s_2'}^2XZ^2-s_1'{s_2'}^3Z^3,
      \\[4pt]
      Y^2Z=4X^3-\bar{s}_1^2\bar{s}_2XZ^2-\bar{s}_1^3\bar{s}_2^2Z^3.
    \end{cases}
  \end{align}
  These equations are also the local equations of $\mathcal{W}$. By blowing-up $\mathcal{W}$ along $E_1\cup E_2\cup E_3$ appearing in Figure \ref{blowing up the cusp}, we obtain a local desingularization of $\mathcal{W}$. 
  Note that the discriminant is given by (\ref{cusp:total transform of D }). 
  The singular fibres are described as follows:
  \begin{itemize}
    \item The singular fibres over generic points of $E_1$ are of type $II$ since $(L,K,N)=(1,1,2)$ over $E_1$.
    \item The singular fibres over generic points of $E_2$ are of type $III$ since $(L,K,N)=(1,2,3)$ over $E_2$.
    \item The singular fibres over generic points of $E_3$ are of type $I_0^*$ since $(L,K,N)=(2,3,6)$ over $E_3$.
    \item The singular fibres over generic points of $\widetilde{Q}$ are of type $I_1$ since $(L,K,N)=(0,0,1)$ over $\widetilde{Q}$.
    \item The singular fibres over the collision points $r_1$, $r_2$, and $r_3$ in Figure \ref{blowing up the cusp}, which are the intersection point of $E_1$ with $E_3$, the one $E_2$ with $E_3$, and the one $\widetilde{Q}$ with $E_3$, respectively, are not of the Kodaira types. 
    On Miranda's list, the dual graphs of each singular fibre are written as in Figures \ref{The singular fibre over p_1}, \ref{The singular fibre over p_2}, \ref{The singular fibre over p_3}.
    \begin{figure}[H]
      \centering
      \begin{minipage}{0.4\textwidth}
        \centering
        \scalebox{0.8}[0.8]{
        \begin{tikzpicture}[line width=1pt, node distance=1.3cm]
          \node(a)           [draw,circle]{$1$};
          \node[right of=a](b)   [draw,circle]{$2$};
          \node[right of=b](c)   [draw,circle]{$3$};
          \draw (a) -- (b);
          \draw (b) -- (c);
        \end{tikzpicture}
        }
        \caption{The singular fibre over $r_1$}
        \label{The singular fibre over p_1}
      \end{minipage}
      \begin{minipage}{0.4\textwidth}
        \centering
        \scalebox{0.8}[0.8]{
        \begin{tikzpicture}[line width=1pt, node distance=1.3cm]
          \node(a)           [draw,circle]{$1$};
          \node[right of=a](b)   [draw,circle]{$2$};
          \node[right of=b](c)   [draw,circle]{$3$};
          \node[right of=c](d)           [draw,circle]{$2$};
          \node[right of=d](f)   [draw,circle]{$1$};
          \draw (a) -- (b);
          \draw (b) -- (c);
          \draw (c) -- (d);
          \draw (d) -- (f);
      \end{tikzpicture}
        }
        \caption{The singular fibre over $r_2$}
        \label{The singular fibre over p_2}
      \end{minipage}
      \begin{minipage}{0.4\textwidth}
        \centering
        \vspace{2em}
        \scalebox{0.8}[0.8]{
        \begin{tikzpicture}[line width=1pt, node distance=1.3cm]
          \node(a)           [draw,circle]{$1$};
          \node[right of=a](b)   [draw,circle]{$2$};
          \node[right of=b](c)   [draw,circle]{$1$};
          \draw (a) -- (b);
          \draw (b) -- (c);
        \end{tikzpicture}
        }
        \caption{The singular fibre over $r_3$}
        \label{The singular fibre over p_3}
      \end{minipage}
    \end{figure}
  \end{itemize}
  \item After blowing-up at $(0:1:0)\in\CC\PP^2$, the local equations of the total transforms of $G_2$ and $G_3$ are defined by
  \begin{align}
    \widehat{G_2}:
    \left\{
    \begin{array}{l}
      \displaystyle {u'}^9{s'}^3\left( \frac{u's'}{12}+\varphi_1'(u',s'u') \right)=0,
      \\[10pt]
      \displaystyle {t'}^6{v'}^9\left( \frac{v'}{12}+\varphi_1'(t'v',v') \right)=0,
      \\[10pt]
      \displaystyle \bar{u}^8s^6\left( \frac{1}{12}+\bar{u}\varphi_2' \right)=0,
      \\[10pt]
      \displaystyle t^2\bar{v}^8\left( \frac{1}{12}+t\bar{v}\bar{\varphi}_2 \right)=0,
    \end{array}
    \right.
  \end{align}
  and
  \begin{align}
    \widehat{G_3}:
    \left\{
    \begin{array}{l}
      \displaystyle {u'}^{12}{s'}^4\left( \frac{{u'}^3{s'}^2}{216}+\psi_1'(u',s'u') \right)=0,
      \\[10pt]
      \displaystyle {t'}^8{v'}^{12}\left( \frac{t'{v'}^3}{216}+\psi_1'(t'v',v') \right)=0,
      \\[10pt]
      \displaystyle \bar{u}^{12}s^8\left( \frac{s}{216}+\psi_2' \right)=0,
      \\[10pt]
      \displaystyle t^3\bar{v}^{12}\left( \frac{1}{216}+t\bar{\psi}_2 \right)=0,
    \end{array}
    \right.
  \end{align}

  respectively, where $(u',s')$, $(t',v')$, $(\bar{u},s)$, and $(t,\bar{v})$ are the blowing-up coordinates. Furthermore, the following equations are the pull-back of the defining equation of $W$:
  \begin{align*}
    \left\{
      \begin{array}{l}
        \displaystyle Y^2Z=4X^3-{u'}^9{s'}^3\left( \frac{u's'}{12}+\varphi_1'(u',s'u') \right)XZ^2-{u'}^{12}{s'}^4\left( \frac{{u'}^3{s'}^2}{216}+\psi_1'(u',s'u') \right)Z^3,
        \\[10pt]
        \displaystyle Y^2Z=4X^3-{t'}^6{v'}^9\left( \frac{v'}{12}+\varphi_1'(t'v',v') \right)XZ^2-{t'}^8{v'}^{12}\left( \frac{t'{v'}^3}{216}+\psi_1'(t'v',v') \right)Z^3,
        \\[10pt]
        \displaystyle Y^2Z=4X^3-\bar{u}^8s^6\left( \frac{1}{12}+\bar{u}\varphi_2' \right)XZ^2-\bar{u}^{12}s^8\left( \frac{s}{216}+\psi_2' \right)Z^3,
        \\[10pt]
        \displaystyle Y^2Z=4X^3-t^2\bar{v}^8\left( \frac{1}{12}+t\bar{v}\bar{\varphi}_2 \right)XZ^2-t^3\bar{v}^{12}\left( \frac{1}{216}+t\bar{\psi}_2 \right)Z^3.
      \end{array}
    \right.
  \end{align*}
  
  Since these local equations do not satisfy the condition (C) in Subsection \ref{Miranda's elliptic threefolds}, we change the homogeneous fibre coordinates as described in the subsection \ref{Miranda's elliptic threefolds}. Then, we obtain the local equations of $\mathcal{W}$ and the discriminant as follows:
  \begin{align}
    \left\{
        \begin{array}{l}
          \displaystyle Y^2Z=4X^3-u'{s'}^3\,\bar{\Phi}_1XZ^2-{s'}^4\,\bar{\Psi}_1Z^3,
          \\[4pt]
          \displaystyle Y^2Z=4X^3-{t'}^2v'\,\bar{\Phi}_1XZ^2-{t'}^2\,\bar{\Psi}_1Z^3,
          \\[4pt]
          \displaystyle Y^2Z=4X^3-s^2\,\bar{\Phi}_2XZ^2-s^2\,\bar{\Psi}_2Z^3,
          \\[4pt]
          \displaystyle Y^2Z=4X^3-t^2\,\bar{\Phi}_2XZ^2-t^3\,\bar{\Psi}_2Z^3,
        \end{array}
      \right.
      \hspace{2em}
      \left\{
        \begin{array}{l}
        \displaystyle\widehat{\Delta}= {s'}^8\,\bar{\Delta},
        \\[4pt]
        \displaystyle\widehat{\Delta}= {t'}^{4}\,\bar{\Delta},
        \\[4pt]
        \displaystyle\widehat{\Delta}= s^{4}\widetilde{\Delta},
        \\[4pt]
        \displaystyle\widehat{\Delta}= t^7\widetilde{\Delta}.
        \end{array}
      \right.
  \end{align}
  By blowing-up $\mathcal{W}$ along $\widetilde{E}_1\cup\widetilde{E}_2\cup\widetilde{L}$ appearing in Figure \ref{ultimate blowing up at (0:1:0)}, we have a local desingularization of $\mathcal{W}$. Note that the fibres over $E_3\cup E_4$ except for the intersection points with $\widetilde{Q}$, $\widetilde{L}$, $\widetilde{E_1}$, $\widetilde{E_2}$ are smooth elliptic curves. The singular fibres are described as follows:
  \begin{itemize}
    \item The singular fibres over $\widetilde{E}_1$ are of type $IV^*$  since $(L,K,N)=(3,4,8)$ over $\widetilde{E}_1$.
    \item The singular fibres over $\widetilde{E}_2$ are of type $IV$  since $(L,K,N)=(2,2,4)$ over $\widetilde{E}_2$.
    \item The singular fibres over $\widetilde{L}$ are of type $I_1^*$  since $(L,K,N)=(2,3,7)$ over $\widetilde{L}$.
    \item The singular fibres over $\widetilde{Q}$ are of type $I_1$  since $(L,K,N)=(0,0,1)$ over $\widetilde{Q}$.
  \end{itemize}
  \item After blowing-up at $(0:0:1)\in\CC\PP^2$, the local equations of the total transforms of $G_2$ and $G_3$ are given by
  \begin{align*}
    \widehat{G_2}:
    \left\{
      \begin{array}{l}
      \displaystyle  {u'}^4{v'}^2\left( \frac{1}{12}+{u'}^3{v'}^2\varphi' \right)=0, 
      \\[15pt]
    \displaystyle  \bar{u}^2\bar{v}^4\left( \frac{1}{12}+\bar{u}\bar{v}^3\bar{\varphi} \right)=0,
      \end{array}
    \right.
  \end{align*}
  and
  \begin{align*}
    \widehat{G_3}:
    \left\{
      \begin{array}{l}
      \displaystyle  {u'}^6{v'}^3\left( \frac{1}{216}+{u'}^3{v'}^2\psi' \right)=0,
      \\[15pt]
    \displaystyle  \bar{u}^3\bar{v}^6\left( \frac{1}{216}+\bar{u}\bar{v}^3\bar{\psi} \right)=0,
      \end{array}
    \right.
  \end{align*}
  respectively, where $(u',v')$ and $(\bar{u},\bar{v})$ are the blowing-up coordinates. Furthermore, the following equations are the pull-back of the original local equation defining $W$:
  \begin{align*}
    \left\{
      \begin{array}{l}
      \displaystyle  Y^2Z=4X^3-{u'}^4{v'}^2\left( \frac{1}{12}+{u'}^3{v'}^2\varphi' \right)XZ^2- {u'}^6{v'}^3\left( \frac{1}{216}+{u'}^3{v'}^2\psi' \right)Z^3,
      \\[15pt]
    \displaystyle  Y^2Z=4X^3- \bar{u}^2\bar{v}^4\left( \frac{1}{12}+\bar{u}\bar{v}^3\bar{\varphi} \right)XZ^2- \bar{u}^3\bar{v}^6\left( \frac{1}{216}+\bar{u}\bar{v}^3\bar{\psi} \right)Z^3.
      \end{array}
    \right.
    \end{align*}
    Since these equations do not satisfy the condition (C) in Subsection \ref{Miranda's elliptic threefolds}, we change the homogeneous fibre coordinates as in the previous case (b). Then, we obtain the local equations of $\mathcal{W}$ and the discriminant as follows:
    \begin{align*}
      \left\{
        \begin{array}{l}
          \displaystyle  Y^2Z=4X^3 -{v'}^2\left( \frac{1}{12}+{u'}^3{v'}^2\varphi' \right)XZ^2 -{v'}^3\left( \frac{1}{216}+{u'}^3{v'}^2\psi' \right)Z^3,
          \\[15pt]
          \displaystyle  Y^2Z=4X^3- \bar{u}^2\left( \frac{1}{12}+\bar{u}\bar{v}^3\bar{\varphi} \right)XZ^2 -\bar{u}^3\left( \frac{1}{216}+\bar{u}\bar{v}^3\bar{\psi} \right)Z^3,
        \end{array}
      \right.
      \left\{
        \begin{array}{l}
          \widehat{\Delta}=\displaystyle {u'}^4{v'}^8\widetilde{\Delta},
          \\[15pt]
          \widehat{\Delta}=\displaystyle \bar{u}^7\bar{v}^4\widetilde{\Delta}.
        \end{array}
      \right.
    \end{align*}
    By blowing-up $\mathcal{W}$ along $E_1\cup E_2\cup \widetilde{L}$ appearing in Figure \ref{blowing up at (0:0:1)}, we have a local desingularization of $\mathcal{W}$. The singular fibres are described as follows:
    \begin{itemize}
      \item The singular fibres over generic points of $E_1$ are of type $I_2^*$ since $(L,K,N)=(2,3,8)$ over $E_1$.
      \item The singular fibres over generic points of $E_2$ are of type $I_4$ since $(L,K,N)=(0,0,4)$ over $E_2$.
      \item The singular fibres over generic points of $\widetilde{L}$ are of type $I_1^*$ since $(L,K,N)=(2,3,7)$ over $\widetilde{L}$.
      \item The singular fibres over generic points of $\widetilde{Q}$ are of type $I_1$ since $(L,K,N)=(0,0,1)$ over $\widetilde{Q}$. 
      \item The singular fibres over the collision points $q_1$, $q_2$, and $q_3$ in Figure \ref{blowing up at (0:0:1)} are of type $I_4^*$, $I_5$, and $I_3^*$, respectively from Miranda's list.
    \end{itemize}
  
  \item Since $\displaystyle\left( A_1/A_0,A_2/A_0 \right)=\left( \pm\alpha,\frac{\alpha^2}{4}\pm2 \right)$ are nodes of the discriminant locus, we do not have to blow up the base space at these two points. By blowing-up $\mathcal{W}$ over these points, we have a local desingularization of $\mathcal{W}$. The singular fibres over generic points of the discriminant locus are of type $I_1$ except for these two points. The singular fibres over these two points are of type $I_2$ on Miranda's list.
\end{enumerate}

To sum up, we have the following theorem.
\begin{theorem}
  The singular elliptic fibration $\pi_{W}\colon W\to\CC\PP^2$ is birationally equivalent to a Miranda elliptic threefold
  \begin{align*}
    \pi_{\widehat{\mathcal{W}}}\colon\widehat{\mathcal{W}}\to\widehat{\CC\PP^2},
  \end{align*}
  with the discriminant locus $\widehat{D}$ whose support is given by 
  \begin{align*}
    \mathrm{supp}\left(\widehat{D}\right)=\bigcup_{i=1}^4\left( E_{1,p_i}\cup E_{2,p_i}\cup E_{3,p_i} \right)\cup\left(\widetilde{E}_{1,(0:1:0)}\cup\widetilde{E}_{2,(0:1:0)}\right)\cup\left( E_{1,(0:0:1)}\cup E_{2,(0:0:1)} \right)\cup\widetilde{L}\cup\widetilde{Q},
  \end{align*}
  where $p_i,~(i=1,2,3,4)$ are four cusps of the original discriminant locus and $E_{1,p_i}$, $E_{2,p_i}$, and $E_{3,p_i}$ are exceptional divisors as in {\rm{Figure \ref{blowing up the cusp}}} with respect to $p_i$. The singular fibres of $\pi_{\widehat{\mathcal{W}}}$ are described as follows:
  \begin{itemize}
    \item The singular fibres over generic points of $\widetilde{Q}$ are of type $I_1$.
    \item The singular fibres over $\widetilde{L}$ are of type $I_1^*$.\item The singular fibres over generic points of $E_{1,p_i}$ are of type $II$.
    \item The singular fibres over generic points of $E_{2,p_i}$ are of type $III$.
    \item The singular fibres over generic points of $E_{3,p_i}$ are of type $I_0^*$.
    \item The singular fibres over the intersection point of $E_{1,p_i}$ with $E_{3,p_i}$, the one of $E_{2,p_i}$ with $E_{3,p_i}$, and the one of $\widetilde{Q}$ with $E_{3,p_i}$ are displayed as in {\rm{Figure \ref{The singular fibre over p_1}}}, {\rm{\ref{The singular fibre over p_2}}}, and {\rm{\ref{The singular fibre over p_3}}}, respectively.
    \item The singular fibres over $\widetilde{E}_{1,(0:1:0)}$ are of type $IV^*$.
    \item The singular fibres over $\widetilde{E}_{2,(0:1:0)}$ are of type $IV$.
    \item The singular fibres over generic points of $E_{1,(0:0:1)}$ are of type $I_2^*$.
    \item The singular fibres over generic points of $E_{2,(0:0:1)}$ are of type $I_4$.
    \item The singular fibres over the intersection point of $E_{1,(0:0:1)}$ with $E_{2,(0:0:1)}$, the one of $\widetilde{Q}$ with $E_{2,(0:0:1)}$, and the one of $\widetilde{L}$ with $E_{2,(0:0:1)}$ are of type $I_4^*$, $I_5$, and $I_3^*$, respectively.
    \item The singular fibres over the two nodes of $\widetilde{Q}$ are of type $I_2$. 
  \end{itemize}
\end{theorem}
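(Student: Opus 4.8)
The strategy is to assemble the local analyses of Sub-subsections 4.3.1 and 4.3.2 into a single global statement. I would first define the base modification $\sigma\colon\widehat{\CC\PP^2}\to\CC\PP^2$ as the composite of all the blowing-ups prescribed in Sub-subsection 4.3.1: the three successive blowing-ups at each cusp $p_1,\dots,p_4$ of $Q$ as in case (a), together with the blowing-ups over the tangency points $(0:1:0)$ and $(0:0:1)$ as in cases (b) and (c). By Theorem \ref{Discriminant locus} the reduced discriminant $D_0=L\cup Q$ is singular exactly at these four cusps, the two nodes of $Q$, and the two points $(0:1:0)$, $(0:0:1)$ where $Q$ is tangent to $L$ (compare Corollary \ref{singularities of D}); over the two nodes no base blowing-up is needed, so these eight points carry all the modifications. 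Since they are pairwise distinct, the local changes have disjoint supports and glue to a morphism $\sigma$ that is an isomorphism over $\CC\PP^2\setminus\mathrm{Sing}(D_0)$. Pulling $\pi_W$ back through $\sigma$ gives a Weierstra{\ss} fibration $\pi_{\mathcal{W}}\colon\mathcal{W}\to\widehat{\CC\PP^2}$, and a resolution $\widehat{\mathcal{W}}\to\mathcal{W}$ of the total space produces $\pi_{\widehat{\mathcal{W}}}$. As $\sigma$ and the resolution are birational, $\pi_{\widehat{\mathcal{W}}}$ is birationally equivalent to $\pi_W$.

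I would then confirm that $\pi_{\widehat{\mathcal{W}}}$ is a Miranda elliptic threefold and read off its singular data. Conditions (\ref{condition A}), (\ref{condition B}), and (C) hold by the very construction of $\sigma$: the reduced total transform of $D$ acquires only nodes, and the extra blowing-up at $r'_2$ in case (b) was inserted precisely to force every colliding type occurring in $\widehat{\mathcal{W}}$ onto Table \ref{Miranda's list}. The support of $\widehat{D}$ is then the union of the proper transforms $\widetilde{L}$, $\widetilde{Q}$ together with those exceptional divisors whose generic fibre is singular; here one must discard the divisors $E_3$, $E_4$ created over $(0:1:0)$, whose generic fibres are smooth by the computation in case (b), which leaves exactly the support displayed in the statement. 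The fibre over a generic point of each component is the Kodaira type dictated by the local triple $(L,K,N)$ via Table \ref{Kosaira's-list}, while the fibres over the nodes of the reduced transform follow from the colliding Kodaira types through Miranda's classification (Table \ref{Miranda's list} and Proposition \ref{prop:types-over-E}). At the cusps the colliding types $II+I_0^*$, $III+I_0^*$, and $I_1+I_0^*$ produce the contractions of $IV^*$, $III^*$, and $I_1^*$ drawn in Figures \ref{The singular fibre over p_1}--\ref{The singular fibre over p_3}; over $(0:0:1)$ the collisions yield $I_4^*$, $I_5$, $I_3^*$; and the two nodes of $Q$ contribute $I_2$ fibres by case (d).

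The main obstacle is the global bookkeeping rather than any isolated computation. One must verify that the eight disjoint local pictures are mutually compatible and jointly exhaust $\mathrm{supp}(\widehat{D})$, tracking exactly which exceptional divisors carry singular fibres (so as to discard $E_3$, $E_4$) and which proper transforms are globally a single curve, since $\widetilde{L}$ and $\widetilde{Q}$ thread through several charts and accumulate several distinct collision points. Equally delicate is the passage from the colliding Kodaira types at a node to the Miranda fibre over it: the mere sum of the local triples $(L,K,N)$ does not directly give the fibre, so one must locate the correct row of Table \ref{Miranda's list} in each case and carry out the prescribed contractions, checking en route that no colliding pair falls outside Miranda's list.
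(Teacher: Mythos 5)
Your proposal is correct and takes essentially the same route as the paper: there the theorem is proved precisely by assembling the local analyses of Sub-subsections 4.3.1 and 4.3.2 --- the base blow-ups at the four cusps and at the tangency points $(0:1:0)$, $(0:0:1)$, the pull-back and desingularization of the total space, verification of conditions (A)--(C), and the fibre types read off from the $(L,K,N)$ triples (Table \ref{Kosaira's-list}) and from Miranda's collision list (Table \ref{Miranda's list}). Your additional bookkeeping --- gluing the disjoint local modifications, discarding the exceptional divisors $E_3$, $E_4$ over $(0:1:0)$ whose generic fibres are smooth, and identifying the cusp collisions $II+I_0^*$, $III+I_0^*$, $I_1+I_0^*$ with the contracted $IV^*$, $III^*$, $I_1^*$ fibres --- matches the paper's own treatment.
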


\subsection{Monodromy of the original elliptic fibration $\pi_W$}

\indent
In this subsection, we calculate the monodromy of the original elliptic fibration $\pi_{W}\colon W\to\CC\PP^2$ constructed in Subsection \ref{Formulation of W}. By Theorem \ref{Discriminant locus}, the regular locus $\CC\PP^2\mathbin{\setminus} \mathrm{supp}\left( D \right)$ can be identified with $\CC^2\mathbin{\setminus} Q_{\mathrm{aff}}$, where $Q_{\mathrm{aff}}$ denotes the affine part of the singular quintic curve $Q$. Hence, we have the isomorphism of fundamental groups:
\begin{align*}
  \pi_1\left( \CC\PP^2\mathbin{\setminus} \mathrm{supp}\left( D \right) \right)\,\cong\pi_1\left( \CC^2\mathbin{\setminus} Q_{\mathrm{aff}} \right).
\end{align*}

We first describe the relations between the two generators around each singular point of $Q_{\mathrm{aff}}$ by Zariski--van Kampen Theorem \cite{Zariski_1929,vanKampen_1933}. Next we determine the monodromy of the elliptic fibration $\pi_{W}$.

Let $C\subset\CC^2$ be the affine plane curve which is singular at $(0,0)\in\CC^2$ defined through
\begin{align*}
  x^p-y^q=0,
\end{align*}
where $p,q\in\NN$ and $p,q\geq 2$. 
In our situation, it is enough to consider the case since the singular points of $Q_{\mathrm{aff}}$ are nodes and cusps. Let $p_1\colon\CC^2\mathbin{\setminus}C\to\CC$ be the first projection. We can take the open discs $D_{r}, D_{r'}$ in $\CC$ such that the restriction
\begin{align*}
  p\colon p_1^{-1}\left( D_{r}\setminus\{0\} \right)\cap\left( D_{r}\times D_{r'}\mathbin{\setminus}C \right)\to D_{r}\setminus\{0\},
\end{align*}
of $p_1$ is the locally trivial fibration with a section $s\colon D_{r}\setminus\{0\}\to p_1^{-1}\left( D_{r}\setminus\{0\} \right)\cap\left( D_{r}\times D_{r'}\mathbin{\setminus}C \right)$. We choose the base point $b\in D_{r}\setminus\{0\}$ such that $b$ is a real number satisfying $0<b<r$. As is well known, $\pi_1\left( D_{r}\setminus\{0\},b \right)$ acts on $\pi_1\left( p^{-1}\left( b \right),s\left( b \right) \right)$ from right, which is called the {\it{monodromy action}}. We denote this action by
\begin{align*}
  g\mapsto g^{a}~~\left(\, g\in\pi_1\left( p^{-1}\left( b \right),s\left( b \right) \right),\,a\in\pi_1\left( D_{r}\setminus\{0\},b \right) \,\right).
\end{align*}
Since the fibre $p^{-1}(b)$ is the complement to $q$ distinct points in $D_{r'}$, $\pi_1\left( p^{-1}\left( b \right),s\left( b \right) \right)$ is the free group generated by $g_1,\dots,g_q$. Moreover, $\pi_1\left( D_{r}\setminus\{0\},b \right)$ is the free group with a generator $\gamma$. Then, the following theorem holds.

\begin{theorem}[Zariski--van Kampen Theorem]\label{Zariski-vanKampen}
  Let $\iota\colon p^{-1}(b)\hookrightarrow \CC^2\mathbin{\setminus}C$ be the inclusion. Then, the homomorphism $\iota_{*}\colon\pi_1\left( p^{-1}(b),s\left( b \right) \right)\to\pi_1\left( \CC^2\mathbin{\setminus}C,s\left( b \right) \right)$ induced by $\iota$ is surjective and its kernel is the smallest normal subgroup of $\pi_1\left( p^{-1}(b),s\left( b \right) \right)$ containing the subset
  \begin{align*}
    \Set{g_i^{-1}g_i^{\gamma}|i=1,\dots,q}.
  \end{align*}
  Therefore, $\pi_1\left( \CC^2\mathbin{\setminus}C,s\left( b \right) \right)$ is isomorphic to
  \begin{align*}
    \langle g_1,\dots,g_q \left|\,g_i=g_i^{\gamma},\,i=1,\dots,q \right. \rangle.
  \end{align*}
\end{theorem}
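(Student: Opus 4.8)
The plan is to extract the group $\pi_1\left(p^{-1}(b),s(b)\right)\rtimes\pi_1\left(D_r\setminus\{0\},b\right)$ from the fibration $p$ and then to restore the fibre over $x=0$; filling in this central fibre kills $\gamma$ and converts the semidirect-product relations into the asserted monodromy relations $g_i=g_i^{\gamma}$.

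First I would use the homotopy long exact sequence of the locally trivial fibration $p$, whose total space I abbreviate by $E:=p_1^{-1}(D_r\setminus\{0\})\cap(D_r\times D_{r'}\setminus C)$. Since $D_r\setminus\{0\}$ is homotopy equivalent to $S^1$ its second homotopy group vanishes, and the fibre $p^{-1}(b)$ is connected, so the sequence collapses to
\[
  1\longrightarrow\pi_1\left(p^{-1}(b),s(b)\right)\longrightarrow\pi_1(E)\longrightarrow\pi_1\left(D_r\setminus\{0\},b\right)\longrightarrow 1 .
\]
The section $s$ splits this extension, whence $\pi_1(E)\cong\langle g_1,\dots,g_q\rangle\rtimes\langle\gamma\rangle$, where the conjugation action of $\gamma$ on the free group $\langle g_1,\dots,g_q\rangle$ is exactly the monodromy $g\mapsto g^{\gamma}$. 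In other words $\pi_1(E)$ is presented by $\langle g_1,\dots,g_q,\gamma\mid\gamma^{-1}g_i\gamma=g_i^{\gamma}\rangle$.

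Next I would pass from $E$ to $\CC^2\setminus C$ by adjoining the fibre over $x=0$. Enlarging $D_{r'}$ if necessary, I may assume the section is chosen in the constant form $s(x)=(x,c)$ with $|c|$ larger than the modulus of every root of $y^q=x^p$ for $x\in D_r$; such an $s$ extends across $x=0$, because $(0,c)\notin C$, and the resulting group is insensitive to this choice. Its trace $\{(x,c):x\in D_r\}$ is then a disc in $\CC^2\setminus C$ bounded by the loop $\gamma$, so $\gamma$ becomes nullhomotopic and the normal closure $\langle\langle\gamma\rangle\rangle$ is contained in the kernel of $\pi_1(E)\to\pi_1(\CC^2\setminus C)$. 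Since $\pi_1(E)$ is generated by the $g_i$ and $\gamma$, and $\gamma$ maps to the identity, the images of the $g_i$ generate the target; this yields the surjectivity of $\iota_{*}$ and, imposing $\gamma=1$ on the presentation above, shows that the normal closure of $\{g_i^{-1}g_i^{\gamma}\}$ lies in $\ker\iota_{*}$.

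The delicate point, which I expect to be the main obstacle, is the reverse inclusion: that restoring the central fibre imposes no relation beyond $\gamma=1$, i.e. that $\ker\left(\pi_1(E)\to\pi_1(\CC^2\setminus C)\right)$ equals $\langle\langle\gamma\rangle\rangle$. Here I would apply van Kampen's theorem to the covering of the local complement $(D_r\times D_{r'})\setminus C$ by $E$ and a tubular neighbourhood of the central fibre, glued along a copy of $E$, using crucially that $x=0$ is the only value over which the $q$ points $\{y:y^q=x^p\}$ collide, so that no further monodromy contributes. Finally, the quasi-homogeneity of $C$, namely its invariance under $t\cdot(x,y)=(t^qx,t^py)$, furnishes the conic structure identifying $(D_r\times D_{r'})\setminus C$ with $\CC^2\setminus C$ up to homotopy, so that this local computation indeed produces the global fundamental group. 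Once the van Kampen gluing is checked to contribute precisely the single relation $\gamma=1$, the presentation $\langle g_1,\dots,g_q\mid g_i=g_i^{\gamma}\rangle$ follows immediately.
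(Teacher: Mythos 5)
First, a point of comparison that matters for this review: the paper does not prove this statement at all. It is quoted as a classical theorem, and the remark immediately following it refers the reader to Zariski, van Kampen, and Cheniot for proofs. So your proposal can only be judged on its own terms. Its architecture is the standard one: the homotopy exact sequence of the fibration, split by the section, gives $\pi_1(E)\cong\langle g_1,\dots,g_q\rangle\rtimes\langle\gamma\rangle$ with $\gamma$ acting by the monodromy; a constant section at height $|c|>r^{p/q}$ extends across $x=0$, so its loop $\gamma$ bounds a disc in $\CC^2\mathbin{\setminus}C$; and the weighted $\CC^{\ast}$-action $(x,y)\mapsto(t^qx,t^py)$ identifies $(D_r\times D_{r'})\mathbin{\setminus}C$ with $\CC^2\mathbin{\setminus}C$ up to homotopy. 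These parts are correct.

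The genuine gap sits exactly at the step you yourself call the main obstacle, and your sketch of it fails as written. You propose van Kampen for the cover of the local complement by $E$ and ``a tubular neighbourhood of the central fibre, glued along a copy of $E$.'' There are two readings, and neither does what you claim. If the neighbourhood is an honest tubular neighbourhood $N\cong F_0\times D^2$ of $F_0=\{0\}\times\left(D_{r'}\setminus\{0\}\right)$, then the gluing region is $E\cap N=N\setminus F_0\simeq F_0\times S^1$, a torus with $\pi_1\cong\ZZ^2$ --- not a copy of $E$, whose group $F_q\rtimes\ZZ$ is nonabelian for $q\geq 2$. If instead the neighbourhood is the slab $\left(D_\epsilon\times D_{r'}\right)\mathbin{\setminus}C$, then it is indeed glued to $E$ along a copy of $E$, but the weighted rescaling carries the whole local complement into that slab, so $\pi_1(N)$ is isomorphic to the very group being computed and van Kampen returns a tautology. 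The workable version is the first one, and its content is precisely what your proposal omits: one must identify the two generators of $\pi_1(E\cap N)\cong\ZZ^2$ --- a meridian of the line $\{x=0\}$, which is homotopic in $E$ to the constant-section loop $\gamma$ and bounds a normal disc in $N$, and a longitude, which maps in $\pi_1(E)$ to a conjugate of $g_1\cdots g_q$ and to the generator of $\pi_1(N)\cong\ZZ$. Only then does van Kampen give $\left(F_q\rtimes\langle\gamma\rangle\right)\ast\ZZ$ modulo exactly the relations $\gamma=1$ and $t=g_1\cdots g_q$, i.e. $F_q/\langle\langle g_i^{-1}g_i^{\gamma}\rangle\rangle$. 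That identification is the entire theorem; without it nothing rules out extra relations.

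Two further slips. Your claim that the outcome is ``insensitive to this choice'' of section is false, and dangerously so: for the node $(p,q)=(2,2)$ the monodromy is conjugation by $\omega=g_2g_1$, so the section winding once in the fibre direction has lift $\gamma'=\omega\gamma$, which centralizes $F_2$; relative to that section the monodromy relations are vacuous and the ``theorem'' would yield $F_2$ rather than $\ZZ^2$. Hence choosing a section that extends across $x=0$, as your constant section does, is essential, not a convenience. Finally, your surjectivity sentence only shows that the $g_i$ generate the image of $\pi_1(E)$ in $\pi_1\left(\CC^2\mathbin{\setminus}C\right)$, not the whole target; surjectivity of $\pi_1(E)\to\pi_1\left(\CC^2\mathbin{\setminus}C\right)$ needs a separate (easy) general-position argument --- loops are one-dimensional and can be pushed off the complex line $\{x=0\}$ --- or must be extracted from the corrected van Kampen step.
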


The relation $g_i=g_i^{\gamma}$ appearing in Theorem \ref{Zariski-vanKampen} is called the {\it{monodromy relation}}.
\begin{remark}
  Theorem \ref{Zariski-vanKampen} is a special case of Zariski--van Kampen Theorem. See e.g. \cite{Zariski_1929,vanKampen_1933,Cheniot_1973} for more general settings.
\end{remark}

In the case $(p,q)=(2,2)$, $C$ has an ordinary double point at the origin. It is easy to check that $g_i^{\gamma}\,(i=1,2)$ are written as
\begin{align*}
  g_1^{\gamma}=g_2g_1g_2^{-1},\,g_2^{\gamma}=g_2g_1g_2g_1^{-1}g_2^{-1}.
\end{align*}
Hence, the monodromy relations amount to
\begin{align*}
  g_1g_2=g_2g_1.
\end{align*}
In the case $(p,q)=(3,2)$, $C$ has a cusp at the origin. In this case, we can verify that $g_i^{\gamma}\,(i=1,2)$ are written as
\begin{align*}
  g_1^{\gamma}=g_2g_1g_2g_1^{-1}g_2^{-1},\,g_2^{\gamma}=g_2g_1g_2g_1g_2^{-1}g_1^{-1}g_2^{-1}.
\end{align*}
Thus, the monodromy relations yield the following one:
\begin{align*}
  g_1g_2g_1=g_2g_1g_2.
\end{align*}
Therefore, around a node of $Q_{\mathrm{aff}}$, the generators $a_1,a_2$ corresponding to the closed arcs in $\CC^2\setminus Q_{\mathrm{aff}}$ described as in Figure \ref{figure_node} satisfy
\begin{align}
  a_1b_1=b_1a_1.\label{relation_node}
\end{align}
On the other hand, around a cusp of $Q_{\mathrm{aff}}$, the generators $a_2,b_2$ corresponding to the closed arcs in $\CC^2\setminus Q_{\mathrm{aff}}$ described as in Figure \ref{figure_cusp} satisfy\begin{align}
  a_2b_2a_2=b_2a_2b_2.\label{relation_cusp}
\end{align} 
\begin{figure}[H]
  \centering
  \begin{minipage}{0.4\columnwidth}
    \centering
    \begin{tikzpicture}
      \draw (-2,1) -- (2,-1);
      \draw (-2,-1) -- (2,1);
      \draw[<-] (-1,0.6) arc (20:300:0.15);
      \draw[<-] (-1,-0.4) arc (60:360:0.15);
      \fill[black] (-1.3,0.9) circle(0) node{$a_1$};
      \fill[black] (-1.3,-0.1) circle(0) node{$b_1$};
    \end{tikzpicture}
    \caption{Node}
    \label{figure_node}
  \end{minipage}
  \begin{minipage}{0.4\columnwidth}
    \centering
    \begin{tikzpicture}
      \draw (-2,1) to [out=330,in=110] (0,-1);
      \draw (2,1) to [out=210,in=70] (0,-1);
      \draw[<-] (-1,0.5) arc (30:300:0.15);
      \draw[<-] (1.1,0.3) arc (-100:180:0.15);
      \fill[black] (-1.3,0.8) circle(0) node{$a_2$};
      \fill[black] (1.2,0.8) circle(0) node{$b_2$};
    \end{tikzpicture}
    \caption{Cusp}
    \label{figure_cusp}
  \end{minipage}
\end{figure}

Next we consider the monodromy of the elliptic fibration $\pi_{W}$. Let $p$ be the reference point of $\pi_1\left( \CC^2\mathbin{\setminus}Q_{\mathrm{aff}} \right)$. The monodromy of $\pi_W$ is the representation 
\begin{align*}
  \rho\colon\pi_1\left( \CC^2\mathbin{\setminus}Q_{\mathrm{aff}} \right)\to\mathrm{Aut}\left( H_1\left( \pi_W^{-1}(p),\ZZ \right) \right),
\end{align*}
where the fibre $\pi_W^{-1}\left( p \right)$ is a regular fibre, that is, $\pi_W^{-1}\left( p \right)$ is an elliptic curve. It is known that the automorphism group of the first homology $\mathrm{Aut}\left( H_1\left( \pi_W^{-1}(p),\ZZ \right) \right)$ is isomorphic to $SL\left( 2,\ZZ \right)$. Thus, we can regard the monodromy representation as
\begin{align*}
  \rho\colon\pi_1\left( \CC^2\mathbin{\setminus}Q_{\mathrm{aff}} \right)\to SL\left( 2,\ZZ \right).
\end{align*}

The monodromy matrices of all the types of singular fibres in elliptic surfaces were found by Kodaira \cite{KodairaI-III} up to $SL\left( 2,\ZZ \right)$-conjugacy. Let $\mathrm{Reg}\left( Q_{\mathrm{aff}} \right)$ be the set of smooth points of $Q_{\mathrm{aff}}$. Noting that the singular fibres over the points of $\mathrm{Reg}\left(Q_{\mathrm{aff}} \right)$ are of type $I_1$ in Kodaira's notation and the monodromy matrix of type $I_1$ is given by $\displaystyle\begin{pmatrix} 1 & 1 \\ 0 & 1 \\ \end{pmatrix}$ up to conjugacy in $SL\left( 2,\ZZ \right)$, the following theorem holds.

\begin{theorem}
  With respect to a suitable choice of basis for $H_1\left( \pi_{W}^{-1}\left( p \right),\ZZ \right)$ where $p$ is the reference point of $\pi_1\left( \CC^2\mathbin{\setminus} Q_{\mathrm{aff}} \right)$, the monodromy representation of the original elliptic fibration $\pi_W$ is characterized as follows:
  \begin{itemize}
    \item For each homotopy class of the closed arcs in $\CC^2\setminus Q_\mathrm{{aff}}$ described as in Figure $\ref{figure_node}$ and Figure $\ref{figure_cusp}$, the corresponding monodromy matrix is either 
    \begin{align*}
      \begin{pmatrix}
        1 & 1 \\
        0 & 1 \\
      \end{pmatrix}
      \,\text{or}\,
      \begin{pmatrix*}[r]
        1 & 0 \\
        -1 & 1 \\
      \end{pmatrix*}.
    \end{align*}
    \item Around a node of $Q_\mathrm{{aff}}$, the above monodromy matrices are the same.
    \item Around a cusp of $Q_\mathrm{{aff}}$, the above monodromy matrices are distinct.
  \end{itemize}
\end{theorem}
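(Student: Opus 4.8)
The plan is to combine the two purely topological relations \eqref{relation_node} and \eqref{relation_cusp} supplied by Zariski--van Kampen Theorem with the Picard--Lefschetz description of the monodromy of an $I_1$ degeneration. First I would record that, by the classification established in the previous subsection, the singular fibre over a smooth point of $Q_{\mathrm{aff}}$ is of Kodaira type $I_1$; hence for every meridian $g$ encircling a smooth branch of $Q_{\mathrm{aff}}$ the matrix $\rho(g)\in SL(2,\ZZ)$ is a Picard--Lefschetz transvection $A_\delta\colon x\mapsto x+\langle x,\delta\rangle\,\delta$ along a primitive vanishing cycle $\delta\in H_1\left(\pi_W^{-1}(p),\ZZ\right)$, where $\langle\cdot,\cdot\rangle$ denotes the intersection pairing. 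Any such transvection is conjugate in $SL(2,\ZZ)$ to $\left(\begin{smallmatrix}1&1\\0&1\end{smallmatrix}\right)$, and for a single generator one may choose a symplectic basis with $\delta=e_1$ to realize it as this matrix. This identifies the two admissible values in the statement as the transvections
\[
T=\begin{pmatrix}1&1\\0&1\end{pmatrix},\qquad S=\begin{pmatrix}1&0\\-1&1\end{pmatrix},
\]
along $e_1$ and $e_2$, respectively.

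For a node I would use relation \eqref{relation_node}, i.e. that the two generators $a_1,b_1$ commute. Writing $\rho(a_1)=A_\delta$ and $\rho(b_1)=A_{\delta'}$, a direct computation with the Picard--Lefschetz formula shows that $A_\delta$ and $A_{\delta'}$ commute precisely when $\langle\delta,\delta'\rangle=0$; since $\delta,\delta'$ are primitive this forces $\delta'=\pm\delta$, whence $\rho(a_1)=\rho(b_1)$. Thus the two matrices around a node coincide, and fixing the basis with $\delta=e_1$ makes both equal to $T$. This is consistent with the colliding type $I_1+I_1\to I_2$ found at a node, whose monodromy $\rho(a_1b_1)=T^2$ is unipotent.

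For a cusp I would instead invoke the braid relation \eqref{relation_cusp}. The same computation shows that two transvections satisfy $A_\delta A_{\delta'}A_\delta=A_{\delta'}A_\delta A_{\delta'}$ exactly when $\langle\delta,\delta'\rangle=\pm1$; in that case $\{\delta,\delta'\}$ is a symplectic basis, so $\delta\neq\pm\delta'$ and the two transvections are distinct. Changing the basis so that $\delta=e_1$ and $\delta'=e_2$ simultaneously brings the pair into the normal form $(T,S)$, proving that around a cusp the two matrices are distinct and equal to $T$ and $S$. As a geometric cross-check, at a cusp $p_i$ of $Q$ one has $g_2^{\ast}=g_3^{\ast}=0$, so the fibre $y^2=4x^3$ of $\pi_W$ is a cuspidal cubic of type $II$ and, by Proposition \ref{prop:Sing}, $W$ is smooth over $p_i$; the monodromy around $p_i$ is therefore $\rho(a_2b_2)=TS$, which has order $6$, as required for a type $II$ fibre and impossible were $\rho(a_2)=\rho(b_2)$.

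The main obstacle is establishing the dictionary between the combinatorial relations produced by Zariski--van Kampen Theorem and the intersection number $\langle\delta,\delta'\rangle$ of the two vanishing cycles: one must verify carefully that commutation corresponds to $\langle\delta,\delta'\rangle=0$ and the braid relation to $\langle\delta,\delta'\rangle=\pm1$, and that these are the only possibilities compatible with both generators being $I_1$-transvections. A secondary point requiring care is the selection of a single basis of $H_1\left(\pi_W^{-1}(p),\ZZ\right)$ in which the local pair at the chosen node and the one at the chosen cusp are simultaneously in the normal forms $(T,T)$ and $(T,S)$; here one uses that a pair of primitive vectors with intersection $\pm1$ completes to a symplectic basis, while the case $\langle\delta,\delta'\rangle=0$ degenerates to a single vector.
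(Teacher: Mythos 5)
Your proposal follows the same skeleton as the paper's proof: both start from the observation that the fibres over smooth points of $Q_{\mathrm{aff}}$ are of type $I_1$, so every meridian is sent to a matrix conjugate to $T=\left(\begin{smallmatrix}1&1\\0&1\end{smallmatrix}\right)$, and both then feed the Zariski--van Kampen relations \eqref{relation_node} and \eqref{relation_cusp} into $SL(2,\ZZ)$ algebra. Where the paper manipulates matrices by hand, you repackage the computation in the Picard--Lefschetz dictionary of vanishing cycles and intersection numbers; for the node this is clean and complete (commuting transvections force $\langle\delta,\delta'\rangle=0$, hence $\delta'=\pm\delta$ and $A_\delta=A_{\delta'}$).

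At the cusp, however, your key lemma is false as stated: it is not true that two transvections satisfy $A_\delta A_{\delta'}A_\delta=A_{\delta'}A_\delta A_{\delta'}$ \emph{exactly} when $\langle\delta,\delta'\rangle=\pm1$, because if $\delta'=\pm\delta$ then $A_\delta=A_{\delta'}$ and the braid relation holds trivially (both sides equal $A_\delta^3$) while $\langle\delta,\delta'\rangle=0$. The correct statement is that the braid relation holds iff $\langle\delta,\delta'\rangle=\pm1$ \emph{or} $A_\delta=A_{\delta'}$, so the braid relation alone cannot yield the distinctness asserted in the third bullet of the theorem. Consequently, what you call a ``geometric cross-check'' is not optional but is the decisive step: over each cusp $p_i$ the fibre is a cuspidal cubic, of Kodaira type $II$ on a generic line through $p_i$ (the local triple is $(L,K,N)=(1,1,2)$ by Remark \ref{rmk_normal_crossing}), so the loop $a_2b_2$ encircling both nearby branches of $Q_{\mathrm{aff}}$ must map to an order-six matrix; this is compatible with $\rho(a_2)\rho(b_2)=TS=\left(\begin{smallmatrix}0&1\\-1&1\end{smallmatrix}\right)$ but not with $\rho(a_2)=\rho(b_2)$, which would give the infinite-order matrix $T^2$. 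With that paragraph promoted to an essential step, your argument is complete. It is worth noting that the paper's own proof has the same lacuna: it passes from ``$B_2$ is conjugate to $T$ and satisfies the braid relation with $A_2=T$'' directly to $B_2=\left(\begin{smallmatrix}1&0\\-1&1\end{smallmatrix}\right)$, never excluding $B_2=T$; your version, once the fibre-type argument is made part of the proof proper, is in fact the more careful one.
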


\begin{proof}
  Let $A_1,\, B_1,\, A_2,\, B_2\in SL\left( 2,\ZZ \right)$ denote the monodromy matrices corresponding to the generators $a_1,\, b_1,\, a_2,\, b_2$ as in Figure \ref{figure_node} and Figure \ref{figure_cusp}, respectively. 
  
  Since the generators $a_2,b_2$ satisfy the relation \eqref{relation_cusp}, the monodromy matrices $A_2, B_2$ satisfy the relation
  \begin{align*}
    A_2B_2A_2=B_2A_2B_2.
  \end{align*}
  Without loss of generality, we can assume that $A_2=\begin{pmatrix} 1 & 1 \\ 0 & 1 \\ \end{pmatrix}$. Noting that $B_2$ is conjugate to $\begin{pmatrix} 1 & 1 \\ 0 & 1 \\ \end{pmatrix}$ in $SL\left( 2,\ZZ \right)$, we obtain
  \begin{align*}
    B_2=
    \begin{pmatrix*}[r]
      1 & 0 \\
      -1 & 1 \\
    \end{pmatrix*}.
  \end{align*}

  On the other hand, since the generators $a_1,b_1$ satisfy the relation \eqref{relation_node}, the monodromy matrices $A_1, B_1$ satisfy the relation
  \begin{align*}
    A_1B_1=B_1A_1.
  \end{align*}
  From the above discussion, we should consider that $A_1$ is given as following two matrices:
  \begin{align*}
    \begin{pmatrix}
      1 & 1 \\
      0 & 1 \\
    \end{pmatrix}
    ,\,
    \begin{pmatrix*}[r]
      1 & 0 \\
      -1 & 1 \\
    \end{pmatrix*}.
  \end{align*}
  In the case $A_1=\begin{pmatrix}1 & 1 \\0 & 1 \\\end{pmatrix}$, we have
  \begin{align*}
    B_1=
    \begin{pmatrix}
      1 & 1 \\
      0 & 1 \\
    \end{pmatrix},
  \end{align*}
  since $B_1$ is conjugate to $\begin{pmatrix}1 & 1 \\0 & 1 \\\end{pmatrix}$ in $SL\left( 2,\ZZ \right)$. In the other case, noting that the matrix $B_1$ is conjugate to $\begin{pmatrix*}[r] 1 & 0 \\ -1 & 1 \\ \end{pmatrix*}$ in $SL\left( 2,\ZZ \right)$, we obtain
  \begin{align*}
    B_1=
    \begin{pmatrix*}[r]
      1 & 0 \\
      -1 & 1 \\
    \end{pmatrix*}.
  \end{align*}
\end{proof}

\section*{Acknowledgments}
The author would like to thank Daisuke Tarama (Ritsumeikan University) for helpful comments and discussions. This work is supported by JST SPRING Grant Number JPMJSP2101.




\bibliographystyle{aomalpha}
\bibliography{refs_AG,refs_DS,text}


\medskip
\begin{flushleft}
  Genki Ishikawa
  \\
  Department of Mathematical Sciences, Ritsumeikan University.
  \\
  1-1-1 Nojihigashi, Kusatsu, Shiga, 525-8577, Japan.
  \\
  E-mail address: \texttt{ra0070fs@ed.ritsumei.ac.jp}
\end{flushleft}
\end{document}